\newtheorem{assumption}{Assumption}
\def\E{\mathbb{E}}
\def\P{\mathbb{P}}
\def\ind#1{\mathds{1}\left\{#1\right\}}
\definecolor{light-gray}{HTML}{F7F7F7}
\definecolor{frame-color}{HTML}{CFCFCF}
\newtheorem{theorem}{Theorem}
\newtheorem{lemma}{Lemma}
\newtheorem{corollary}{Corollary}
\newtheorem{proposition}{Proposition}
\theoremstyle{definition}
\newtheorem{remark}{Remark}
\newtheorem{definition}{Definition}
\newcommand{\thetalow}{\underline{\theta}}
\newcommand{\thetahigh}{\overline{\theta}}
\newcommand{\slow}{\underline{s}}
\newcommand{\shigh}{\overline{s}}
\def\E{\mathbb{E}}
\def\PP{\mathbb{P}}
\newcommand{\noPD}{\rm{noPD}}
\newcommand{\PD}{\rm{PD}}
\newcommand{\alphahat}{\hat{\alpha}}
\newcommand{\signalaware}{signal aware\xspace} 
\newcommand{\signalblind}{signal blind\xspace}
\newcommand{\stackutility}{$\mathbb{U}^*_1$\xspace}
\newcommand{\stackstrat}{commitment strategy\xspace}
\newcommand{\truthstrat}{\pi^s_{\mathrm{truthful}}}
\newcommand{\stratstrat}{\pi^s_{\mathrm{strategic}}}
\newcommand{\pdstrat}{\pi^p_{\PD}}
\newcommand{\nopdstrat}{\pi^p_{\noPD}}
\newcommand{\cber}{CBER}
\title{Privacy Can Arise Endogenously in an \\ Economic System with Learning Agents}  
\author{Nivasini Ananthakrishnan\thanks{equal contribution}, Tiffany Ding\footnotemark[1], Mariel Werner\footnotemark[1], \\
Sai Praneeth Karimireddy, Michael I. Jordan} 
\date{University of California, Berkeley} 
\begin{document}
\maketitle

\begin{abstract}
We study price-discrimination games between buyers and a seller where privacy arises endogenously---that is, utility maximization yields equilibrium strategies where privacy occurs naturally. In this game, buyers with a high valuation for a good have an incentive to keep their valuation private, lest the seller charge them a higher price. This yields an equilibrium where some buyers will send a signal that misrepresents their type with some probability; we refer to this as \emph{buyer-induced privacy}. When the seller is able to publicly commit to providing a certain privacy level, we find that their equilibrium response is to commit to ignore buyers' signals with some positive probability; we refer to this as \emph{seller-induced privacy}. We then turn our attention to a repeated interaction setting where the game parameters are unknown and the seller cannot credibly commit to a level of seller-induced privacy. In this setting, players must learn strategies based on information revealed in past rounds. We find that, even without commitment ability, seller-induced privacy arises as a result of reputation building. We characterize the resulting seller-induced privacy and seller's utility under no-regret and no-policy-regret learning algorithms and verify these results through simulations. 
\end{abstract}

\section{Introduction}

The question of how to define and preserve privacy in the age of big data and machine learning has been a topic of ongoing debate in the computer science and policy communities. The most widely accepted theoretical framework for privacy is the notion of differential privacy~\citep{dwork2014privacy}, which provides a rigorous mathematical definition of privacy as the ability to withstand membership inference attacks. That is, differential privacy ensures that the output of a computation obfuscates whether any individual's data was present in the input.

However, the practical implementation of differential privacy has been fraught with challenges. There has been significant debate around how to interpret the key privacy parameter $\varepsilon$ and how to choose it to achieve meaningful privacy protection~\citep{nissim2017bridging}. This is especially true when data is continuously collected from users (what does it mean to have a guarantee of $\varepsilon=1$ \emph{per day}?) This has also led to controversies where companies have claimed their algorithms are private, when in fact the chosen $\varepsilon$ value implies negligible actual privacy~\citep{tang2017privacy}. Further complicating matters, there are multiple variants and extensions of differential privacy---e.g. $(\varepsilon, \delta)$-DP~\citep{dwork2014privacy}, Reyni-DP~\citep{mironov2017renyi}, Gaussian-DP~\citep{dong2019gaussian}, etc. Each have their own parameters with differing interpretations and implications.

Moreover, a growing body of work argues that the public's understanding of privacy is quite different from this formal notion of differential privacy~\citep{solow2022information,ligett2023we}. While differential privacy focuses on membership inference, privacy is more commonly understood to mean preventing others from using one's data in ways that are misaligned with one's interests, such as price discrimination or other exploitative practices. 

This work seeks to provide a new perspective on privacy that bridges the gap between the theoretical computer science view and the public's intuitive understanding. We develop a game-theoretic model of privacy that allows us to analyze the effect of privacy choices on all the stakeholders. Additionally, the framework shows how to derive \emph{optimal} privacy mechanisms that balance the gain in privacy with loss of accuracy in order to maximize net profit.
In our model, a ``principal'' (e.g., a platform or seller) can observe signals from ``agents'' (e.g., users or buyers) and use this information to maximize its own profit, while the agents have an incentive to obfuscate their data to prevent exploitation. 

We focus on a price-discrimination setting involving interactions between buyers and sellers. 
We show that ``buyer-induced privacy'' behavior, which resembles randomized response, arises endogenously as an equilibrium strategy. Furthermore, we find that the seller is often better off committing to not observing the agents' data at all (``seller-induced privacy''), as the revenue loss from buyer-induced privacy can be substantial.

Finally, we extend our analysis to a dynamic setting where the principal is a learning agent who interacts with multiple agents over time. We demonstrate how a simple external auditing mechanism can implement the principal's commitment to privacy and lead to an equilibrium with endogenously arising privacy-preserving behavior.

Our results provide a new framework for understanding privacy that encompasses both the theoretical guarantees of differential privacy and the practical, user-centric notion of privacy. By modeling privacy as an emergent property of an economic system, we hope to offer insights that can inform the design of privacy-preserving platforms and policies, ultimately bridging the gap between theory and practice in this important domain.

\subparagraph{Motivating example.}
In the absence of regulation, online retailers may price discriminate based on information they have collected about past purchases of the customers. Some customers may be willing to pay more for a good than others, perhaps due to innate preferences for certain types of good or because they have more disposable income. The retailer wants to identify customers with higher valuations and charge them higher prices in order to maximize their revenue. 

Since customers are aware of the potential for price discrimination, they may engage in evasive action to protect their privacy. Customers may avoid choosing goods that signal their true preferences for less consequential purchases, e.g., a high-income customer choosing between an expensive water bottle that is slightly better than a cheaper option may opt to buy the cheaper bottle in an attempt to obscure their income status. This evasive action imposes a cost on the customer, who misses out on buying their truly preferred product, and also on the retailer, who would have preferred to sell the more expensive product.

What are the behaviors that arise at equilibrium? What if the seller can credibly commit to not price discriminate? How do these behaviors change in more realistic settings where game parameters are not known and strategies must be learned based on past interactions? These are questions we answer in this paper.

\subsection{Preview of contributions}

We introduce a price-discrimination game in Definition~\ref{pd-game} that involves buyers of two types---one with a high valuation and one with a low valuation of an item. A seller may potentially track buyers' signals that reveal their valuations. We characterize the perfect Bayes Nash equilibrium of this game in Theorem~\ref{thm:single stage equilibrium} and show that a buyer-induced privacy mechanism emerges in the equilibrium. That is, the buyer with a high valuation, with some probability, chooses an evasive action to appear to have a low valuation. 

We then introduce commitment ability for the seller wherein a seller can commit to not track buyers' signals with some probability. In the price-discrimination game with commitment, the equilibrium response (Corollary~\ref{cor:stackelberg}) results in seller-induced privacy, which obviates the need for buyer-induced privacy. That is, with some probability, the seller chooses to commit to respect privacy and voluntarily does not track signals. Due to this privacy commitment from the seller, it is optimal for buyers to truthfully report their type. We call this seller-induced privacy the ``\stackstrat'' and denote the resulting utility \stackutility. 

In Section \ref{sec:repeated_interaction}, we remove the seller's commitment ability but give buyers access to the seller's historical pricing. We model this as a repeated interaction between a seller and buyers with each buyer participating in only one round. 
The pricing history is used by buyers to construct the seller's ``reputation'' (i.e., an estimate of the probability of price discrimination), which buyers then use to inform their signaling strategy. We model the buyers as using a reputation construction procedure that satisfies a consistency condition given in Definition~\ref{def:consistentbeliefs}, which requires that the reputation is able to differentiate between sellers employing price-discriminating strategies and non-price-discriminating strategies. 
In Proposition~\ref{prop:existence_of_consistent_estimator}, we show the existence of such a reputation mechanism using the available history. 
We show that consistent reputation can yield seller-induced privacy (i.e., ignoring signals), depending on the model of the seller; we consider no-regret and no-policy-regret sellers. Our findings are:

\begin{enumerate}
    \item With a no-regret seller, there could be no seller-induced privacy. That is, the seller can use signals and price discriminate in every round and still be no-regret (Proposition~\ref{prop:regret min may not increase utility}).
    \item Regret minimization achieves strictly less average utility (asymptotically) than \stackutility (Proposition~\ref{prop:regretOddsStack}).
    \item Employing the \stackstrat in every round is a no-policy-regret algorithm for the seller (Proposition~\ref{prop:policyRegretStack}).
    \item Employing the \stackstrat in every round ensures the seller (asymptotically) an average utility of \stackutility. This the highest possible average utility achievable (asymptotically) in the repeated interaction (Proposition~\ref{prop:noBetterThanStack}).
\end{enumerate}

\subsection{Related work}
Our work sits at the intersection of many areas, ranging from classical economics to online learning.

There is a vast literature on \emph{privacy} in computer science studying mechanisms for notions of privacy such as differential privacy \citep{dwork2014privacy}. The mechanisms arising in our setting resemble mechanisms in these works. We observe local privacy (buyer-induced privacy) where users add noise to their data. We also observe central privacy (seller-induced privacy) where the platform ensures similar outcomes for different user data.

Literature in economics studies the economic implications of enacting privacy mechanisms (see~\cite{acquisti2016economics} for a survey). Within this body of work, there is a literature on privacy and \emph{price discrimination} (e.g., \cite{acquisti2004, conitzer2012, montes2015, fudenberg2006behavior}). We build on this work and extend to a setting that relaxes common-prior assumptions for buyers and sellers so that players must now devise strategies based on what they learn from repeated interactions.

In these repeated interactions, we observe the emergence of a \emph{reputation-based privacy mechanism}. This reputation, learned by buyers based on previous interactions, takes the place of the prior that is used in the single-interaction game. There are numerous papers in economics on reputation focusing on sellers' reputations for the quality of the proffered good~\citep{horner2002reputation, shapiro1983reputation, ely2003reputation}. We focus on seller's reputation for enacting price discrimination and analyze how this arises in an online learning framework. 

We also study the differences in behavior that arise from seller \emph{commitment}, which has been studied in \cite{hart1988contract}, \cite{acquisti2004}, \cite{fudenberg2006behavior} and \cite{ichihashi2020online}. We show that even without commitment, similar behavior can arise through repeated interactions where reputation substitutes for the role of commitment. 

Finally, we draw upon work on \emph{online learning} and \emph{repeated games}. There are a number of papers~\citep{camara2020prior, deng2019strategizing, haghtalab2024calibrated, foster1997calibrated} on repeated interactions between a principal and an agent where the agent chooses actions based on evolving beliefs about the principal's actions. In our setting, we interpret the evolving beliefs as the reputation of the principal. Our setting differs in two ways. The first is that the principal's actions are not revealed at the end of the round. Instead partial information about the action, depending on the agent's response, is revealed. The second is that our results hold for weaker conditions on the agent's beliefs compared to previous work. 

\section{A Price-Discrimination Game} \label{sec:single-interaction} 

We formulate price discrimination as a sequential, incomplete-information game between $n$ buyers and a seller.

\begin{definition}[PD game]\label{pd-game} The price-discrimination game with parameters $n, \alpha, \mu, \thetahigh, \thetalow, c_B, c_S$, denoted the \emph{($n, \alpha, \mu, \thetahigh, \thetalow, c_B, c_S$)-PD game}, has the following extensive-form representation. 
\begin{enumerate}
    \item \textbf{Nature's move.} The game begins with Nature assigning types to each participant according to random draws. For $i \in [n]$, the type for buyer $i$ is $\theta_i \in \{\thetalow, \thetahigh\}$, representing their valuation of the item being sold, with $\thetalow < \thetahigh$. A buyer is type $\thetahigh$ with probability $\mu$ and type $\thetalow$ with probability $1-\mu$. The seller's type $\chi$ is either \emph{\signalaware} ($\chi = 1$) or \emph{\signalblind} ($\chi = 0$). The seller is \signalaware with probability $\alpha$ and \signalblind with probability $1-\alpha$. 
    \item \textbf{Signaling stage.} Based on their assigned type $\theta_i$, each buyer signals $s_i \in \{\slow, \shigh\}$. Signaling one's true type ($\slow$ for type $\thetalow$ and $\shigh$ for type $\thetahigh$) incurs no cost, whereas signaling a mismatched type, referred to as ``evasion,'' imposes a cost $c_B$ on the buyer and a cost $c_S$ on the seller.\footnote{We can more generally allow for each type of buyer impose a different evasion cost (e.g., if a  $\thetahigh$-buyer evades, the costs are $\bar{c}_B, \bar{c}_S \in \mathbb{R}$, and if a $\thetalow$-buyer evades, the costs are $\underline{c}_B, \underline{c}_S  \in \mathbb{R}$. However, as we later show, the only costs that are relevant are the evasion costs associated with the $\thetahigh$-seller, because the $\thetalow$ seller will never choose to evade, so we can think of $c_B = \bar{c}_B$ and $c_S = \bar{c}_S$.}
    \item \textbf{Pricing decision.} The seller chooses a price $p_i$ to set for buyer $i$. The information the seller can use to set the prices depends on the type of seller. A signal-aware seller can set prices depending on the signals sent by the buyers, that is, they can set one price for all buyers that signaled $\slow$ and a different price for all buyers that signaled $\shigh$. A signal-blind seller must set the same price for all buyers since they have no information to distinguish buyers.
    \item \textbf{Purchase decisions.} Each buyer, based on the price $p_i$ set for them and their valuation $\theta_i$, makes a choice $b_i \in \{0, 1\}$, to purchase the item ($b_i=1$) or not ($b_i=0$).
    \item \textbf{Utilities.} All players receive their respective utilities. Each buyer's positive utility is zero if they do not buy the item and the difference between their valuation and price otherwise. If they took evasive action in the signaling stage, their negative utility is equal to their cost of evasion $c_B$. That is, buyer $i$'s utility is 
        \begin{align} \label{eq:buyer_utility_function}
            u_B(\theta_i, s_i, p_i, b_i) &= (\theta_i - p_i)b_i - c_B e(\theta_i, s_i) 
        \end{align}
    where $e(\theta_i, s_i) = \mathds{1}\{(\theta_i = \thetalow \wedge s_i = \shigh) \vee (\theta_i = \thetahigh \wedge s_i = \slow)\}$ indicates evasion or not.
    The seller's overall utility is the sum of utilities $u_S(\theta_i, s_i, p_i, b_i)$ from their interactions with each buyer. The positive utility due to buyer $i$ is the revenue $p_i$ if buyer $i$ buys and zero otherwise. If the buyer took evasive action in the signaling stage, the seller incurs negative utility $c_S$.
        That is, the seller's utility is
    \begin{align}
            u_S \left ((\theta_i, s_i, p_i, b_i)_{i=1}^n \right ) &= \sum_{i=1}^n u_S(\theta_i, s_i, p_i, b_i) 
            = \sum_{i=1}^n p_i b_i - c_S e(\theta_i, s_i) \label{eq:seller_utility_function}.
        \end{align}
\end{enumerate}
\end{definition}

\subparagraph{Mixed strategies.} For simplicity of presentation, our game definition is stated in terms of pure strategies (i.e., players take deterministic actions). However, we can more generally allow players to employ mixed strategies. A \emph{mixed strategy} for a player is a distribution over allowed actions conditioned on the information available when taking the action: buyer $i$'s mixed signaling strategy induces a conditional distribution over signals $\pi_i^s(\cdot | \theta_i) \in \Delta(\{\slow, \shigh\})$; the seller's mixed pricing strategy induces conditional distributions $\pi^p(\cdot | \slow, \chi)$, $\pi^p(\cdot | \shigh, \chi)$ over positive reals with the constraint $\pi^p(\cdot | s = \slow, \chi = 0) = \pi^p(\cdot | s = \shigh, \chi = 0)$; finally, each buyer $i$'s mixed buying strategy induces conditional distribution $\pi_i^b(\cdot | \theta_i, p_i) \in \Delta(\{0, 1\})$.

Let $\pi = (\pi^s, \pi^p, \pi^b)$ denote a mixed strategy profile. $\pi$, along with the probability of player types described in Step 1 of Definition \ref{pd-game} (which we will denote 
$p(\chi)$ and $p(\theta_i)$)
induce a distribution over action profiles with the probability of an action profile $\left ( \chi, (\theta_i, s_i, p_i, b_i)_{i=1}^n \right )$ given by
\begin{align} \label{eq:joint_distribution}
    \mathbb{P}\left ( \chi, (\theta_i, s_i, p_i, b_i)_{i=1}^n \right ) &= p(\chi) \prod_{i=1}^n p(\theta_i) \pi_i^s(s_i | \theta_i) \pi^p(p_i | \theta_i, \chi) \pi_i^b(b_i | \theta_i, p_i).
\end{align}
Given a mixed strategy profile $\pi$,
we will denote the expected utility for the seller and buyer $i$ by
\begin{align*}
    U_S(\pi) &= \E \left [ u_S \left ((\theta_i, s_i, p_i, b_i)_{i=1}^n \right ) \right ] \qquad \text{and} \qquad U_B^i(\pi) = \E \left [ u_B \left (\theta_i, s_i, p_i, b_i \right ) \right ]
    ,
\end{align*}
where the expectation is over the joint distribution in \eqref{eq:joint_distribution}. 

\textbf{Solution concept}. We study the \emph{perfect Bayes Nash equilibrium (PBNE)}. Mixed strategies of players constitute a PBNE if the following conditions hold: (1) sequential rationality, meaning that each player's strategy constitutes a best response to their beliefs about the other players' types and strategies, given the history of the game up to the point of choosing the action and (2) consistency of beliefs, meaning that players' beliefs about other players' types are updated following Bayes' rule.

The following theorem characterizes the PBNE of the price-discrimination game described in Definition \ref{pd-game}. 

\begin{theorem}\label{thm:single stage equilibrium}
An ($n, \alpha, \mu, \thetahigh, \thetalow, c_B, c_S$)-PD game has the following unique perfect Bayes Nash equilibrium. Define $\Delta\theta=\thetahigh-\thetalow$.
\begin{enumerate}[(a)]
    \item Buyers with type $\theta_i = \thetalow$ will signal $s_i=\slow$.
    \item Buyers with type $\theta_i = \thetahigh$ will signal 
    \begin{align}
        s_i =
        \begin{cases}
            \slow \text{ w.p. }  q^* & \text{if } \alpha  > c_B/\Delta\theta \quad \qquad \text{where } q^*=\min\bigg\{1,\frac{(1-\mu)\thetalow}{\mu\Delta\theta}\bigg\}
            \\
            \shigh & \text{otherwise}.
        \end{cases}
    \end{align}
    \item The signal-aware seller sets price 
    \begin{align}
    p^*_{\mathrm{\signalaware}}(s)=
    \begin{cases}
        \thetalow & \text{if signal } s=\slow \text{ is observed}\\
        \thetahigh & \text{if signal } s=\shigh \text{ is observed}.
    \end{cases}
    \end{align}
    \item The signal-blind seller sets price 
    \begin{align}
    p^*_{\mathrm{\signalblind}}=
    \begin{cases}
        \thetalow & \text{if } \thetalow \geq \mu \thetahigh \\
        \thetahigh & \text{otherwise}.
    \end{cases}
    \end{align}
    \item Buyer $i$ buys the good if and only if their price $p_i$ is at most their value, so 
    \[b_i = \mathbbm{1}\{\theta_i \leq p_i\}.\]
\end{enumerate}
\end{theorem}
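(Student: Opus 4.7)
I would prove the theorem by backward induction on the game tree (purchase stage, then pricing stage, then signaling stage), combined with a case analysis on the parameters and a separate uniqueness argument.

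First I would dispose of the easy stages. At the purchase stage, buyer $i$'s utility from $b_i=1$ is $\theta_i-p_i$ (the evasion cost in \eqref{eq:buyer_utility_function} is already sunk) versus $0$ from $b_i=0$, so sequential rationality immediately gives $b_i=\mathds{1}\{p_i\leq\theta_i\}$ under the usual tie-breaking, yielding part~(e). For the signal-blind seller, the posterior on each buyer is the prior, so revenue is $\thetalow$ at price $\thetalow$ and $\mu\thetahigh$ at price $\thetahigh$ (any other price is strictly dominated by one of these two), giving part~(d). For the low-type buyer, I would then give a dominance argument: signaling $\shigh$ either causes the signal-aware seller to charge $\thetahigh$ (from which a low type cannot profitably buy) or is ignored by the signal-blind seller, and in either case the low type additionally pays $c_B$; signaling $\slow$ yields zero utility. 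Hence $\slow$ strictly dominates $\shigh$ for the low type, giving part~(a).

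The substantive step is determining the high type's mixing probability $q=\pi^s(\slow\mid\thetahigh)$ jointly with the signal-aware seller's price on signal $\slow$. Using part~(a), Bayes' rule yields
\[
\mathbb{P}(\thetahigh\mid\slow)=\frac{\mu q}{\mu q+(1-\mu)},
\]
so the seller weakly prefers price $\thetalow$ on $\slow$ iff $\thetalow\geq\thetahigh\cdot\mathbb{P}(\thetahigh\mid\slow)$, i.e.\ iff $q\leq q^*=(1-\mu)\thetalow/(\mu\Delta\theta)$ (with indifference exactly at $q=q^*$). On signal $\shigh$ Bayes gives $\mathbb{P}(\thetahigh\mid\shigh)=1$ whenever $\shigh$ is on-path, so $\thetahigh$ is optimal there. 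Under the pricing in part~(c), the high type's utility differential between $\slow$ and $\shigh$ is $\alpha\Delta\theta-c_B$. I would then case-split: (i) if $\alpha\leq c_B/\Delta\theta$, the differential is non-positive so $q=0$ is a best response, which leaves part~(c)'s prices consistent with Bayes; (ii) if $\alpha>c_B/\Delta\theta$ and $q^*=1$ (equivalently $\thetalow\geq\mu\thetahigh$), then $q=1$ is a best response, the seller's on-path posterior on $\slow$ equals the prior, and $\thetalow$ remains optimal; (iii) if $\alpha>c_B/\Delta\theta$ and $q^*<1$, I would rule out $q=0$ and $q=1$ by exhibiting a profitable deviation from each, so both players must randomize. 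Seller indifference then forces $q=q^*$, and the seller's mixing probability on signal $\slow$ is pinned down by the high type's indifference condition.

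Finally, uniqueness follows because in each case the equilibrium strategies are forced by sequential rationality plus Bayes-consistent beliefs, with off-path signals (such as $\shigh$ in case (i), where only low types were shown to signal $\slow$ and hence $\shigh$ is never sent by the characterized strategies) handled by assigning the posterior that supports the dominance argument for the low type. The main obstacle I anticipate is case~(iii): verifying that the strategies form a \emph{perfect} Bayes Nash equilibrium rather than merely a Nash equilibrium requires checking that the high type's mixing and the seller's induced posterior on $\slow$ are mutually consistent via Bayes' rule, and reconciling the mixed seller response with the pure-strategy wording of part~(c) by noting that at $q=q^*$ the seller is exactly indifferent between $\thetalow$ and $\thetahigh$ on $\slow$.
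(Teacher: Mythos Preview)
Your approach is essentially the same as the paper's: backward induction through the purchase, pricing, and signaling stages, Bayes' rule for the seller's posterior on $\slow$, and then a mutual best-response check. Your treatment is in fact more careful than the paper's on two points. First, the paper does not give a separate uniqueness argument; it only verifies that the stated profile is an equilibrium. Second, and more substantively, you correctly flag the tension in case~(iii): when $\alpha>c_B/\Delta\theta$ and $q^*<1$, a seller playing the \emph{pure} strategy $\thetalow$ on $\slow$ makes the high type strictly prefer $\slow$ (gain $\alpha\Delta\theta-c_B>0$), so $q=1$ rather than $q^*$ would be the buyer's best response. The paper's derivation of $q^*$ sidesteps this by building the seller's reaction to $q$ into the buyer's objective $g(q)$ via the indicator $\mathbbm{1}(q\le q^*)$, which is closer to a Stackelberg computation than a PBNE verification. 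Your proposed resolution---seller indifference pins $q=q^*$, and buyer indifference pins the seller's mixing probability on $\slow$---is the standard PBNE analysis for this parameter region.

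One small slip: your parenthetical about off-path beliefs cites case~(i), but in case~(i) the high type sends $\shigh$ with probability one, so $\shigh$ is on-path there. The case where $\shigh$ is genuinely off-path is case~(ii) ($q^*=1$, so both types pool on $\slow$); there any posterior supports pricing $\thetahigh$ on $\shigh$, so the off-path belief is unproblematic.
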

The proof is given in Appendix \ref{sec:proof of single stage equilibrium}. 

\begin{remark}[Buyer-induced privacy]
    The $\thetahigh$-buyers' equilibrium response can be interpreted as a privacy-protecting mechanism. This type of buyer is vulnerable to price discrimination, so rather than always signaling their true type, they may choose to randomize their signal. More specifically, if the cost of evasion is very high, the $\thetahigh$-buyer will tell the truth, but if the evasion cost is low enough, the $\thetahigh$-buyer can receive a reduction in price that is higher than their evasion cost. In the latter case, the $\thetahigh$-buyer must then choose the maximum evasion probability $q^*$ such that it is still in the seller's best interest to take the the buyer's signal at face value. We call this randomization ``buyer-induced privacy.''
\end{remark}

Theorem \ref{thm:single stage equilibrium} tells us that strategic behavior can only happen if $c_B < \Delta\theta$ (otherwise, we can never have $\alpha > \nicefrac{c_B}{\Delta \theta}$, so buyers will always signal truthfully). For the rest of the paper, we will focus on this setting.

\begin{assumption}
In all following results, we assume $c_B < \Delta\theta$.
\end{assumption}

A natural next question is how each player's utility is affected by the game parameters. In particular, we focus on the effect of $\alpha$, due to its connection to privacy. In Figure \ref{fig:order of utilities}, we visualize the utilities of the seller and $\thetahigh$-buyers as $\alpha$ varies from 0 to 1. Observe that the seller's utility increases for $\alpha$ less than some threshold value $\alpha^*$, whose exact value we give in the corollary below. This corresponds to the set of PD-games where the buyer's equilibrium response is truthful. Beyond $\alpha^*$, the $\thetahigh$-buyers' equilibrium response changes to being strategic and the seller's utility drops. 
We formalize the ordering of utilities in the following corollary. 

\begin{corollary}\label{cor:order of utilities}(Order of utilities) Fix $n, \mu, \thetahigh, \thetalow, c_B, c_S$ and let $u_S(\alpha), u_B(\alpha)$ denote the seller's and $\thetahigh$-buyers' equilibrium utilities of the $(n, \alpha, \mu, \thetahigh, \thetalow, c_B, c_S)$-PD game. $u_S(\cdot)$ is maximized at $\alpha^* = c_B / \Delta \theta$, and the equilibrium utilities for the settings where the seller is always \signalblind ($\alpha=0$), is always \signalaware ($\alpha=1$), and is signal aware with probability $\alpha^*$ ($\alpha=\alpha^*$) have the following orderings:
\begin{enumerate}[(a)]
    \item When $\thetalow \geq \mu\thetahigh$,
    \begin{align}
        u_S(\alpha^*)
        >
        u_S(0)
        >
        u_S(1)
        \qquad \text{and} \qquad u_B(0)
        >
        u_B(1)
        =
        u_B(\alpha^*).
    \end{align}
    \item When $\thetalow<\mu\thetahigh$,
    \begin{align}
        u_S(\alpha^*)
        >
        u_S(0)
        >
        u_S(1)
        \qquad \text{and} \qquad 
        u_B(1)
        >
        u_B(0)=
        u_B(\alpha^*).
    \end{align}
\end{enumerate}
$\thetalow$-buyers always receive a utility of zero, regardless of the value of $\alpha$.
\end{corollary}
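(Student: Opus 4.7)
The plan is to directly compute the equilibrium utilities $u_S(\alpha)$ and $u_B(\alpha)$ at the three values $\alpha \in \{0, \alpha^*, 1\}$ by plugging the equilibrium strategies from Theorem~\ref{thm:single stage equilibrium} into \eqref{eq:buyer_utility_function}--\eqref{eq:seller_utility_function}, then compare the resulting expressions. The key observation is that the equilibrium takes two qualitatively different forms depending on the threshold $\alpha^* = c_B/\Delta\theta$: in the \emph{truthful} regime $\alpha \le \alpha^*$ the $\thetahigh$-buyer signals honestly and no evasion cost is paid, whereas in the \emph{randomized} regime $\alpha > \alpha^*$ the $\thetahigh$-buyer signals $\slow$ with probability $q^*$, imposing cost $c_S$ on the seller and $c_B$ on the buyer whenever evasion occurs. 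I would then split by the two cases of the signal-blind price.

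In Case~(a), $\thetalow \ge \mu\thetahigh$ forces $q^* = 1$ and the signal-blind seller prices at $\thetalow$. A direct computation gives $u_S(0)/n = \thetalow$ (all buyers pay $\thetalow$), $u_S(\alpha^*)/n = \thetalow + \mu\alpha^*\Delta\theta = \thetalow + \mu c_B$ (the signal-aware seller extracts $\thetahigh$ from $\thetahigh$-buyers with probability $\alpha^*$), and $u_S(1)/n = \thetalow - \mu c_S$ (every $\thetahigh$-buyer signals $\slow$, so the seller absorbs $c_S$ per $\thetahigh$-buyer and collects only $\thetalow$). The ordering $u_S(\alpha^*) > u_S(0) > u_S(1)$ is immediate. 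For the $\thetahigh$-buyer, $u_B(0) = \Delta\theta$ and $u_B(\alpha^*) = (1-\alpha^*)\Delta\theta = \Delta\theta - c_B = u_B(1)$.

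In Case~(b), $\thetalow < \mu\thetahigh$ gives $q^* = (1-\mu)\thetalow/(\mu\Delta\theta) < 1$ and the signal-blind seller prices at $\thetahigh$. An analogous computation yields $u_S(0)/n = \mu\thetahigh$ (only $\thetahigh$-buyers buy), $u_S(\alpha^*)/n = \mu\thetahigh + (1-\mu)\alpha^*\thetalow$ (the signal-aware seller additionally sells to $\thetalow$-buyers), and, using the seller-indifference identity $\mu q^* \Delta\theta = (1-\mu)\thetalow$, $u_S(1)/n = \mu\thetahigh - \mu q^* c_S$, again yielding $u_S(\alpha^*) > u_S(0) > u_S(1)$. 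For the $\thetahigh$-buyer, $u_B(0) = u_B(\alpha^*) = 0$ (the realized price is $\thetahigh$ with probability one) while $u_B(1) = q^*(\Delta\theta - c_B) > 0$.

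To confirm $\alpha^*$ globally maximizes $u_S(\cdot)$, I would observe that on $[0, \alpha^*]$ the seller's utility is linear and strictly increasing in $\alpha$, since the only effect of raising $\alpha$ in the truthful regime is to let the signal-aware seller price-discriminate more often. On $(\alpha^*, 1]$, in contrast, the seller's utility is independent of $\alpha$, equal to the respective $u_S(1)$ expression above; this is because $q^*$ is set precisely to make the signal-aware seller indifferent between pricing at $\thetalow$ and $\thetahigh$ upon seeing $\slow$, so all $\alpha$-dependent revenue terms cancel. Combined with the strict drop at $\alpha^*$ computed above, this gives $\alpha^* = \argmax_\alpha u_S(\alpha)$. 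The proof is essentially bookkeeping; the main risk is miscounting when the evasion cost $c_S$ is paid (by Step~2 of Definition~\ref{pd-game} it is incurred whenever a $\thetahigh$-buyer signals $\slow$, irrespective of the realized seller type), and the one nontrivial simplification is the identity $\mu q^* \Delta\theta = (1-\mu)\thetalow$, which causes the $\alpha = 1$ expressions in Case~(b) to collapse cleanly.
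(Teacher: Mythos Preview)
Your proposal is correct and follows essentially the same approach as the paper: both proofs compute the equilibrium utilities by plugging the strategies from Theorem~\ref{thm:single stage equilibrium} into the utility functions, split by the two cases $\thetalow \gtrless \mu\thetahigh$, and then compare. If anything, your version is more explicit than the paper's, which writes out the conditional utilities and then simply states that ``plugging in the relevant values of $\alpha$ \ldots gives the stated orderings''; you actually carry out the comparisons, correctly invoke the indifference identity $\mu q^*\Delta\theta = (1-\mu)\thetalow$ in Case~(b), and supply the piecewise-linear argument for why $\alpha^*$ is the global maximizer.
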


\begin{figure}[!t]
\centering
  \centering
  \includegraphics[width=5in]{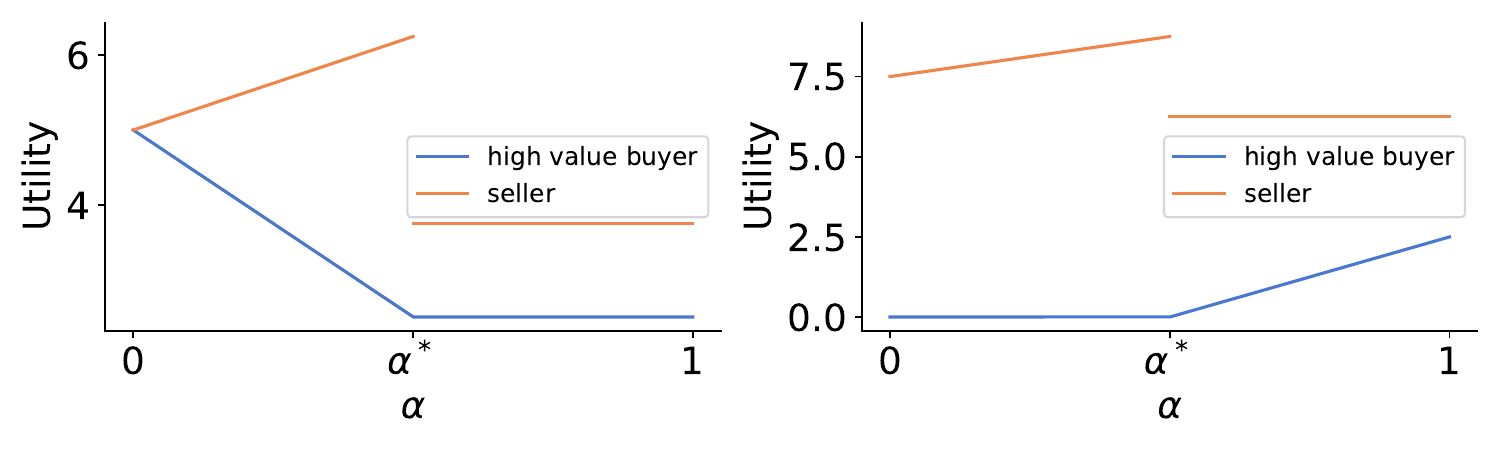}
  \caption{Plots of the $\thetahigh$-buyer and seller utilities as a function of $\alpha$ in the $\thetalow \geq \mu \thetahigh$ setting (left) and the $\thetalow < \mu \thetahigh$ setting (right).}
  \label{fig:order of utilities}
\end{figure}

The proof is given in Appendix \ref{sec: proof of order of utilities}.

\subsection{Price discrimination with seller commitment} 
A key takeaway from Corollary \ref{cor:order of utilities} is that the seller's utilities are dependent on the value of $\alpha$, and if the seller could choose a value of $\alpha$, they would want to choose $\alpha=\alpha^*$ to maximize their utility. Suppose we are now in a setting where the seller is able to choose and publicly commit to an $\alpha$. As a motivating example, suppose that the seller must go through a data broker to access signals, and the data broker publishes trusted summaries of what fraction of buyers the seller requests data on. In such a setting, where $\alpha$ is chosen by the seller instead of treated as given, we arrive at the following equilibrium.

\begin{corollary}(Equilibrium of price-discrimination game with commitment) \label{cor:stackelberg} When the seller has commitment power (i.e., is able to credibly communicate to sellers that they will not price discriminate with some probability), the perfect Bayes Nash equilibrium of the PD-game consists of the following strategies:
\begin{enumerate}[(a)]
    \item The seller commits to not price-discriminating (by playing $p^*_{\mathrm{\signalblind}}$ from Theorem \ref{thm:single stage equilibrium}) with probability $1-\alpha^*$, where $\alpha^* = c_B / \Delta \theta$. 
    \item All buyers always signal truthfully.
\end{enumerate}
The buyers' buying decisions are the same as in Theorem \ref{thm:single stage equilibrium}.
\end{corollary}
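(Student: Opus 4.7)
The plan is to treat the seller's commitment as adding an initial stage to the PD-game of Definition~\ref{pd-game}: the seller publicly announces some $\alpha \in [0,1]$, and then the $(n, \alpha, \mu, \thetahigh, \thetalow, c_B, c_S)$-PD game is played with that announced $\alpha$. Sequential rationality for a PBNE of this enlarged game requires that, for every announced $\alpha$, the continuation play constitute a PBNE of the induced PD-game; Theorem~\ref{thm:single stage equilibrium} fully pins down that continuation. The seller's optimal announcement therefore reduces to the one-dimensional problem of maximizing the continuation-equilibrium utility $u_S(\alpha)$ over $\alpha \in [0,1]$.

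The main step is then to invoke Corollary~\ref{cor:order of utilities}, which states that $u_S(\cdot)$ is maximized exactly at $\alpha^* = c_B/\Delta\theta$. This forces any PBNE of the commitment game to have the seller announce $\alpha = \alpha^*$, which is equivalent to committing to play the signal-blind price $p^*_{\mathrm{\signalblind}}$ with probability $1-\alpha^*$; this is part (a). Substituting $\alpha = \alpha^*$ into Theorem~\ref{thm:single stage equilibrium} then immediately yields the continuation buyer strategies: since the theorem's strategic-evasion branch for $\thetahigh$-buyers is triggered only by the \emph{strict} inequality $\alpha > c_B/\Delta\theta$, at the boundary value $\alpha = \alpha^*$ we fall into the ``otherwise'' branch, so $\thetahigh$-buyers signal $\shigh$ and $\thetalow$-buyers signal $\slow$ truthfully. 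This gives part (b), and the purchase rule $b_i = \mathbbm{1}\{\theta_i \leq p_i\}$ is inherited directly from part (e) of the theorem.

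The main obstacle I anticipate is the buyer indifference at the boundary $\alpha = \alpha^*$: here the $\thetahigh$-buyer's expected payoff savings from evading exactly equals the evasion cost $c_B$, so naively there is a family of best responses parameterized by the evasion probability $q \in [0, q^*]$. Theorem~\ref{thm:single stage equilibrium} already resolves this in favor of truthful response as part of its uniqueness claim, so the cleanest route is to cite it directly. If a self-contained justification is desired, I would add the following standard Stackelberg tie-breaking argument: any continuation with positive evasion is strictly worse for the seller (each evasion incurs the cost $c_S$ and shifts revenue downward), so a continuation with $q > 0$ at $\alpha = \alpha^*$ is incompatible with $\alpha = \alpha^*$ actually being a best announcement --- the seller would instead prefer to announce some $\alpha^* - \varepsilon$, which by Theorem~\ref{thm:single stage equilibrium} uniquely induces truthful signaling and yields utility arbitrarily close to $u_S(\alpha^*)$. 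This boundary-selection step is the only piece that is not a direct cascade of Theorem~\ref{thm:single stage equilibrium} and Corollary~\ref{cor:order of utilities}.
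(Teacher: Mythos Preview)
Your proposal is correct and follows essentially the same route as the paper: invoke Corollary~\ref{cor:order of utilities} to conclude the seller announces $\alpha^*$, then plug $\alpha=\alpha^*$ into Theorem~\ref{thm:single stage equilibrium} to obtain truthful signaling and the inherited purchase rule. Your additional discussion of the boundary tie-breaking at $\alpha=\alpha^*$ is more careful than the paper's own two-line proof, which simply cites the two results without addressing the indifference issue.
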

\begin{proof}
    (a) follows directly from Corollary \ref{cor:order of utilities}, which tells us that the seller's utility is maximized at $\alpha^*$, and (b) comes from applying Theorem \ref{thm:single stage equilibrium} with $\alpha=\alpha^*$.
\end{proof}

\begin{remark}
    Commitment ability allows the seller to achieve a higher utility by providing seller-induced privacy. This seller-induced privacy obviates the need for buyers to take evasion action to create buyer-induced privacy, which benefits the seller.
    We use \stackutility to refer to the seller's maximum achievable equilibrium utility in the single interaction price discrimination game with commitment. This utility is achieved when the seller plays the strategy given in Corollary \ref{cor:stackelberg}.
\end{remark}

\section{Repeated Interactions} \label{sec:repeated_interaction}

In the previous section, we saw the emergence of seller-induced privacy when the seller has commitment ability. If possible, the seller would commit to providing seller-induced privacy (by ignoring signals with probability $1 - \alpha^*$, as in Corollary~\ref{cor:stackelberg}), thereby limiting the extent of price discrimination performed by the seller. However, these results hinge on the buyer believing that the $\alpha$ stated by the seller truly corresponds to the probability of price discrimination. Without this credible commitment from the seller, the story becomes more complicated. 

In this section, we study whether seller-induced privacy can still arise in the absence of such commitment ability, through the development of a reputation based on the seller's historical pricing. 
We ask the question of how the extent of privacy and resulting utilities differ under reputation-based privacy versus commitment-based privacy.
We model the seller as making pricing decisions using an online learning algorithm and show how different models such as \emph{no-regret} and \emph{no-policy-regret} lead to different answers to this question.  

In the repeated interaction setting, we also relax the assumptions that the distribution $\mu$ over agent types and the probability $\alpha$ that the seller looks at the agent's signal are publicly known. 
Rather than playing the single-interaction equilibrium strategies, which require full knowledge of game parameters, 
the players now have to learn strategies online based on past interactions. 

\subsection{Setup}

We consider repeated interactions between a seller and buyers where a new batch of buyers is drawn at each round. We call this as the \emph{repeated PD protocol}. Each round is similar to the one-shot PD-game from Definition~\ref{pd-game} but with the following differences: (1) There is one fixed seller throughout all rounds. 
(2) When players choose actions, they not only have access to information from the current round (as was the case in the one-shot PD game) but also some information from previous rounds. 
Specifically, at round $t$, 
the seller has access to $((s_i^{\tau}, p_i^{\tau})_{i=1}^n)_{\tau=1}^{t-1}$, the signals they observed and the prices they set in previous rounds, and each buyer $i$ has access to $(((\theta_i^{\tau}, s_i^{\tau}, p_i^{\tau})_{i=1}^n)_{\tau=1}^{t-1})$, the buyer types, signals, and prices of all buyers from previous rounds. 
This modeling of the buyers' access is appropriate in settings where buyer information is pooled either through crowd-sourcing or by an auditing entity and made available to buyers. 
(3) The parameter $\mu$ (the probability of a type-$\thetahigh$ buyer) is not known to the seller.
(4) The probability that the seller will price discriminate is not known to buyers, as was assumed in the one-shot PD game; rather, buyers must estimate this probability based on past rounds. We write out the repeated interaction protocol in detail in Algorithm~\ref{alg:repeated_pd}.

\subsection{Model of the buyers}
Since each buyer participates in only one round of the repeated PD protocol, the equilibrium response is still appropriate to model the buyer's response.
However, in the repeated interaction setting, we no longer assume the buyers hold a static, prior belief about the probability of a signal-aware seller. Instead, buyers have evolving beliefs based on the seller's interactions with past buyers.

Some specific buyer strategies we will refer to are $\truthstrat$, which corresponds to always signaling truthfully, and $\stratstrat$, which corresponds to signaling $\slow$ with probability $q^*$ (as defined in Theorem \ref{thm:single stage equilibrium}) and signaling $\shigh$ with probability $1-q^*$.
We consider the following model of buyer behavior.

\begin{definition}[Consistent belief based equilibrium responding (\cber) buyers] Consistent belief based equilibrium responding buyers (or \cber-buyers) form a sequence of beliefs $(\alphahat_t)_{t=1}^T$ satisfying a consistency property defined below and at round $t$, choose the corresponding equilibrium strategy (from Theorem \ref{thm:single stage equilibrium}) of the PD-game with $\alpha = \alphahat_t$. That is, $\thetalow$-buyers always signal truthfully, and $\thetahigh$-buyers signal truthfully (play $\truthstrat$) if $\alphahat_t \leq \alpha^*$ and signal the opposite type with probability $q^*$ otherwise (play $\stratstrat$).   
\end{definition}

We now explain the consistency property.
Given a sequence of seller mixed strategies action profiles that induce the sequences of distributions $(\pi^p_t(\cdot | s=\shigh))_{t=1}^T$ and $(\pi^p_t(\cdot | s=\slow))_{t=1}^T$ indicating price distributions at each round for signals $\slow, \shigh$ respectively, define $\alpha_t$ to be
\[\alpha_t = \mathbb{P}_{\overline{P} \sim \pi^p_t(\cdot | s=\shigh), \underline{P} \sim \pi^p_t(\cdot | s=\slow)} \left [ \overline{P} \neq \underline{P} \right ]. \]
That is, $\alpha_t$ denotes the probability of a different price for $\shigh$ compared to $\slow$ at round $t$. The probability here is over the randomness due to the seller's  mixed strategy at round $t$. $\alpha_t$ is a measure of extent of price discrimination by the seller at round $t$.

\begin{definition}[Consistent sequence]\label{def:consistentbeliefs} Let $\bar{\alpha}_T = (1/T)\sum_{t=1}^T \alpha_t$.
We say a sequence of estimators $(\alphahat_t)_{t=1}^T$ is \emph{consistent} if 
$\lim_{T \rightarrow \infty} \left \lvert \E [\alphahat_T] -  \bar{\alpha}_T \right \rvert = 0$,
where the expectation is taken over the randomness of the history $H_T = ((\theta_i^{t}, s_i^{t}, p_i^{t})_{i=1}^n)_{t=1}^{T-1}$
used to construct $\alphahat_T$. 
\end{definition}

A useful implication of consistency is that $\alphahat_T$ converges pointwise to $\bar{\alpha}_T$.

\begin{lemma} \label{lemma:consistency_implication}
    If $(\alphahat_t)_{t=1}^T$ is a consistent sequence of beliefs, then for any $\epsilon<0$ and $\delta>0$, there exists some positive integer $N$ such that for all $T > N$, we have
    $
    \P\left[\left|\alphahat_T - \bar{\alpha}_T\right| \geq \epsilon\right] \leq \delta
    $.
\end{lemma}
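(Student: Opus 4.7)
The plan is to derive convergence in probability from the consistency condition via Markov's inequality (reading the ``$\epsilon<0$'' in the statement as a typo for $\epsilon>0$).

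First I would split the deviation with the triangle inequality:
\[
\bigl|\alphahat_T - \bar\alpha_T\bigr| \;\leq\; \bigl|\alphahat_T - \E[\alphahat_T]\bigr| \;+\; \bigl|\E[\alphahat_T] - \bar\alpha_T\bigr|.
\]
The second term is deterministic, and by Definition~\ref{def:consistentbeliefs} it can be made smaller than $\epsilon/2$ for all $T \geq N_1(\epsilon)$, since the consistency condition asserts exactly that this bias vanishes.

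Next I would show the first term is small with high probability. Since $\alphahat_T$ is constructed from the history $H_T$ of $T-1$ previous rounds, and a natural consistent estimator (for instance the one used in Proposition~\ref{prop:existence_of_consistent_estimator}) is an empirical average of bounded per-round contributions, a Chebyshev-type bound gives
\[
\P\!\left[\,\bigl|\alphahat_T - \E[\alphahat_T]\bigr| \geq \epsilon/2\,\right] \;\leq\; \frac{4\Var(\alphahat_T)}{\epsilon^2} \;=\; O(1/T),
\]
which is below $\delta$ for all $T \geq N_2(\epsilon,\delta)$. Setting $N = \max(N_1, N_2)$ and combining via a union bound yields $\P[|\alphahat_T - \bar\alpha_T| \geq \epsilon] \leq \delta$ as claimed.

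The main obstacle is a technical gap rather than a conceptual one: the consistency condition as written controls only the bias $|\E[\alphahat_T] - \bar\alpha_T|$, whereas convergence in probability also requires control of the variance of $\alphahat_T$. The cleanest fix is either to read Definition~\ref{def:consistentbeliefs} as an $L^1$ statement $\E|\alphahat_T - \bar\alpha_T| \to 0$, in which case a direct application of Markov's inequality, $\P[|\alphahat_T - \bar\alpha_T| \geq \epsilon] \leq \E|\alphahat_T - \bar\alpha_T|/\epsilon$, finishes the proof in one line, or to rely on the explicit empirical-average structure of the estimator furnished by Proposition~\ref{prop:existence_of_consistent_estimator} to supply the missing concentration. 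Either route collapses the argument to Markov/Chebyshev applied to a bounded estimator, so no further difficulty arises.
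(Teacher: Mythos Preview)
Your diagnosis is sharper than the paper's own proof. The paper takes precisely the one-line Markov route you mention as a possible fix: it asserts
\[
\P\!\left(|\alphahat_T - \bar\alpha_T| \geq \epsilon\right) \;\leq\; \frac{|\E[\alphahat_T] - \bar\alpha_T|}{\epsilon} \;\leq\; \frac{\delta\epsilon}{\epsilon} \;=\; \delta,
\]
after choosing $N$ so that $|\E[\alphahat_T] - \bar\alpha_T| \leq \delta\epsilon$. But Markov's inequality yields $\E|\alphahat_T - \bar\alpha_T|/\epsilon$ on the right-hand side, not $|\E[\alphahat_T] - \bar\alpha_T|/\epsilon$, and Definition~\ref{def:consistentbeliefs} as written controls only the latter. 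In other words, the paper silently conflates $|\E X - c|$ with $\E|X - c|$, which is exactly the gap you identified.

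Your triangle-inequality-plus-Chebyshev decomposition is therefore a genuinely different (and more honest) argument: you separate the deterministic bias, handled by consistency, from the stochastic fluctuation, which you then control via the empirical-average structure of the estimator in Proposition~\ref{prop:existence_of_consistent_estimator}. The paper's approach would be valid only under your suggested $L^1$ reinterpretation of the definition; your approach works under the stated bias-only definition provided one imports a variance bound for the specific estimator. Either of your two proposed repairs is the right way to make the lemma rigorous.
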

\begin{proof}  
Due to consistency and the definition of limits, there exists $N$ such that for all $T > N$, we have $|\E[\alphahat_T] -  \bar{\alpha}_T| \leq \delta\epsilon$. Thus, for $T > N$, we can apply Markov’s inequality to get $\P(|\alphahat_T - \bar{\alpha}_T| \geq \epsilon) \leq (|\E[\alphahat_T] -  \bar{\alpha}_T|)/\epsilon \leq \delta\epsilon/\epsilon =\delta$.
\end{proof}

The following proposition and associated proof provide an algorithm to construct a consistent sequence of estimators $(\alphahat_t)_{t=1}^T$.

\begin{proposition}[Existence of consistent sequence] \label{prop:existence_of_consistent_estimator}
Assume that buyers equilibrium-respond to $\alphahat_t$ at each round $t$. Then, for any sequence of seller actions, there exists a sequence of estimators $(\alphahat_t)_{t=1}^T$ that is consistent.  
\end{proposition}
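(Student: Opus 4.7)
The plan is to construct the estimator explicitly as an inverse-probability-weighted running average of per-round indicators of observed price discrimination. The key point is that a buyer at round $t$ has access to the entire history $((\theta_i^\tau,s_i^\tau,p_i^\tau)_i)_{\tau<t}$, and because the estimator is itself a deterministic function of that history, the buyer also knows $(\alphahat_\tau)_{\tau<t}$ and therefore which signaling strategy ($\truthstrat$ or $\stratstrat$) was in force at each past round.

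For each $\tau<T$, let $R_\tau\in\{0,1\}$ be the history-observable indicator that both signals $\slow$ and $\shigh$ were sent in round $\tau$ and the seller set distinct prices for the two signal groups. Let $\eta_\tau$ denote the conditional probability, given the realized type profile $(\theta_i^\tau)_i$ and the buyer strategy at round $\tau$, that both signals appear; this is a closed-form expression in $q^*_\tau$, $N_\tau^{\thetalow}$, and $N_\tau^{\thetahigh}$, and is therefore computable from the history. I would then take
\[
\alphahat_T=\frac{1}{T-1}\sum_{\tau=1}^{T-1}\frac{R_\tau}{\eta_\tau}\mathds{1}\{\eta_\tau>0\}.
\]

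I would verify consistency in two steps. First, the unbiasedness identity: the seller's per-signal pricing is conditionally independent of which signals are realized, so $\E[R_\tau\mid\text{history},\,\eta_\tau>0]=\eta_\tau\alpha_\tau$ and hence $\E[R_\tau/\eta_\tau\mid\eta_\tau>0]=\alpha_\tau$. By linearity and the tower property,
\[
\bigl|\E[\alphahat_T]-\bar\alpha_T\bigr|\le\frac{1}{T-1}\sum_{\tau<T}\alpha_\tau\Pr(\eta_\tau=0)+O(1/T),
\]
so it suffices to bound $\Pr(\eta_\tau=0)$. Under $\truthstrat$ this probability is at most $(1-\mu)^n+\mu^n$ (the chance of a monotype round), and under $\stratstrat$ with $q^*<1$ it admits a similar bound because every $\thetahigh$-buyer signals $\shigh$ with probability $1-q^*>0$. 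To drive the bound to zero one can additionally rescale the estimator by a plug-in estimate of $\Pr(\eta_\tau>0)$ formed from past type counts (a Horvitz--Thompson correction); the estimator of $\mu$ from past realized types is consistent.

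The main obstacle is the degenerate regime $q^*=1$, which arises exactly when $\thetalow\ge\mu\thetahigh$: as soon as $\alphahat_\tau$ first exceeds $\alpha^*$ and buyers switch to $\stratstrat$, no $\shigh$ signal is ever produced thereafter, $\eta_\tau\equiv 0$, and the seller's off-path distribution $\pi^p_\tau(\cdot\mid s=\shigh)$---and hence the nominal $\alpha_\tau$---becomes unidentifiable. The natural fix, which I would adopt, is the convention $\pi^p_\tau(\cdot\mid s=\shigh):=\pi^p_\tau(\cdot\mid s=\slow)$ whenever $\shigh$ has zero on-path probability; this forces $\alpha_\tau=0$ on all such rounds and is payoff-preserving since an off-path distribution never affects realized outcomes. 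The estimator's default of $0$ on $\eta_\tau=0$ rounds then matches the redefined $\alpha_\tau$, so the consistency bound is vacuous there and the two-step argument closes.
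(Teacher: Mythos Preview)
Your core construction---an inverse-probability-weighted running average of indicators of observed price discrimination---is the paper's idea as well. The gap lies in your choice of normalizer. You condition $\eta_\tau$ on the \emph{realized} type profile $(\theta_i^\tau)_i$; under $\truthstrat$ this forces $\eta_\tau\in\{0,1\}$ (zero exactly on monotype rounds), so the summand $\tfrac{R_\tau}{\eta_\tau}\mathds{1}\{\eta_\tau>0\}$ collapses to $R_\tau$ and $\E[R_\tau\mid H_\tau]=\alpha_\tau\Pr(\eta_\tau>0\mid H_\tau)$. The resulting bias, at least $(\mu^n+(1-\mu)^n)\,\bar{\alpha}_{T}$ over truthful rounds, is bounded away from zero and does \emph{not} vanish as $T\to\infty$; the estimator as written is therefore not consistent. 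Your Horvitz--Thompson patch could repair this, but you have not carried it out, and it introduces a plug-in estimate of $\mu$ whose error must itself be controlled.

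The paper sidesteps all of this by normalizing instead by the probability that both signals appear conditional on $H_\tau$ \emph{alone}---that is, averaging over the round-$\tau$ types as well as the signal randomization, rather than conditioning on the realized types. Since buyers know $\mu$ and the buyer strategy at round $\tau$ is determined by $\alphahat_\tau$ (itself a function of $H_\tau$), this normalizer is a known, deterministic, strictly positive constant (outside the $q^*=1$ regime), so no indicator truncation and no second-stage correction are needed. A single tower-rule computation then gives $\E[\alphahat_T]=\tfrac{1}{T}\sum_{\tau<T}\alpha_\tau$ exactly, and consistency is the one-liner $\bigl|\tfrac{1}{T}\sum_{\tau<T}\alpha_\tau-\bar{\alpha}_T\bigr|=\alpha_T/T\to 0$. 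Your careful flagging of the $q^*=1$ degenerate case (where no $\shigh$ is ever sent under $\stratstrat$ and the off-path $\pi^p_\tau(\cdot\mid\shigh)$ is unidentifiable) is a genuine issue that the paper's proof does not address.
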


\emph{Proof sketch:} 
Since there are multiple buyers at each round, we can infer whether the seller is price discriminating or not by comparing the prices charged to a buyer who signals $\slow$ and a buyer who signals $\shigh$. However, only some rounds are informative about price discrimination; in rounds where all buyers send the same signal, we are not able to determine if the seller had a price discriminatory pricing policy in place.
The consistent estimator $\alphahat_t$ we consider is the fraction of past rounds where price discrimination is observed, normalized to account for the probability that a round is likely to be informative about price discrimination. We show that $E[\alphahat_t]= (1/t) \sum_{\tau =1}^{t-1}\alpha_{\tau}$, which implies that 
$\lim_{T \rightarrow \infty} \left \lvert \E [\alphahat_T] -  \bar{\alpha}_T \right \rvert 
    = \lim_{T \rightarrow \infty} \left \lvert (1/T) \sum_{t=1}^{T-1} \alpha_t -  (1/T) \sum_{t=1}^T \alpha_t \right \rvert 
    = \lim_{T \rightarrow \infty} \alpha_T/T
    = 0 $. 
See Appendix~\ref{sec:proof of existence of consistent estimator} for the full proof.
 
\subsection{Model of the seller}
Since the seller does not a priori know the distribution over buyer types and is engaged in multiple rounds of the repeated interaction, modeling the seller's response by the one-shot equilibrium from Theorem \ref{thm:single stage equilibrium} is not reasonable. 
Instead, we consider the seller as optimizing various common objectives of repeated interactions such as regret minimization and policy-regret minimization. 

The seller's mixed strategy at a given round is a pair of probability distributions $\pi^p_t = (\pi^p_t(\cdot|\shigh), \pi^p_t(\cdot|\slow))$. Let $\Pi$ denote the set of possible mixed strategies. For rational sellers, we can focus on distributions supported only on $\{\thetalow, \thetahigh\}$ without loss of generality. 
Prices supported on $\{\thetalow, \thetahigh\}$ maximize seller revenue in each round. The seller's effect on future rounds is also not affected by limiting the support. This is because the parameters $\alpha_t$ that the buyers' consistent estimator estimates treats \emph{any} difference in prices as indicating price discrimination, so all price differences are treated the same.  

Some specific seller strategies we will refer to are $\pi^p_{\PD}$ and $\pi^p_{\noPD}$. The former is the ``always-price-discriminating strategy,'' with $\pi^p_{\PD}(\thetahigh | \shigh) = \pi^p_{\PD}(\thetalow | \slow) = 1$. The latter is the ``never-price-discriminating strategy,'' with 
$\pi^p_{\noPD}(\thetalow | \slow) = \pi^p_{\noPD}(\thetalow | \shigh) = 1$ if $\thetalow \geq \mu \thetahigh$ and $\pi^p_{\noPD}(\thetahigh | \slow) = \pi^p_{\noPD}(\thetahigh | \shigh) = 1$ otherwise.

\subsubsection{Regret-minimizing seller}

The first seller model we consider is a regret-minimizing seller. 

\begin{definition}[Seller's regret] 
    Given a sequence of mixed strategy profiles $\{\pi_t\} = \{(\pi^s_t, \pi^p_t, \pi^b_t)\}_{t=1}^T$, the \emph{seller's average regret} is 
      \[R_T^S(\{\pi_t\}_{t=1}^T) = \frac 1 T \left [ \max_{\pi^{p*} \in \Pi} \sum_{t=1}^T U_S(\pi^s_t, \pi^{p*}_t, \pi^b_t)  - \sum_{t=1}^T U_S(\pi^s_t, \pi^p_t, \pi^b_t) \right ].\]
\end{definition}

\begin{definition}[No-regret algorithm]
Let $\mathcal{A}_B$ be an algorithm employed by the buyer in the repeated PD protocol. A seller algorithm $\mathcal{A}_S$ in the repeated PD protocol is a \emph{no-regret algorithm} for the seller given $\mathcal{A}_B$ if the sequence of mixed strategies $(\pi_t)_{t=1}^T$ generated by the interaction between $\mathcal{A}_B$ and $\mathcal{A}_S$ has seller's average regret that is sublinear in the number of rounds. That is, $R^S_T((\pi_t)_{t=1}^T) \in o(1).$
\end{definition}

We will denote by $(\pi_t)_{t=1}^T$ the sequence of random variables denoting the players' mixed strategies in each round. Our results analyze the asymptotic convergence of average seller utility. We say that the average seller utility \emph{asymptotically converges} to some value $v$ if   
$\lim_{T \rightarrow \infty} \E \left [(1/T) \sum_{t = 1}^T U_S(\pi_t) \right ] = v$. We write 
$U_S(\pi^p)$ and $U_S(\pi^s, \pi^p)$
when it is clear what the other arguments are.

If the seller employs a no-regret algorithm, then the seller could end up always price-discriminating i.e., no seller-induced privacy. This is stated below.
\begin{proposition}\label{prop:regret min may not increase utility}(Always price-discriminating is regret minimizing) 
Given \cber-buyers, the seller algorithm that always employs the price-discrimination strategy i.e., $\pi^p_t = \pdstrat$ for all timesteps $t$ is a no-regret algorithm for the seller. The seller's average utility asymptotically converges to a value at most $u_S(1)$,
where $u_S(1)$ is the seller's equilibrium utility in the single-interaction PD-game with $\alpha=1$. 
\end{proposition}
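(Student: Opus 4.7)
The plan has two threads: (i) showing that the constant-$\pdstrat$ algorithm is no-regret, and (ii) bounding the asymptotic average utility by $u_S(1)$. Both rest on a single observation. When the seller always plays $\pdstrat$, every round yields distinct prices on $\slow$ versus $\shigh$, so $\alpha_t = 1$ and $\bar{\alpha}_T = 1$. Lemma~\ref{lemma:consistency_implication} then forces $\alphahat_T \to 1$ in probability; since the standing assumption $c_B < \Delta\theta$ gives $\alpha^* < 1$, the event $\{\alphahat_T > \alpha^*\}$ eventually has probability $1-o(1)$. Consequently, the \cber $\thetahigh$-buyers eventually play $\stratstrat$ with high probability, while $\thetalow$-buyers always signal truthfully.

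For part (i), I would establish the stronger claim that $\pdstrat$ is a per-round best response against \emph{any} \cber buyer mode. Fixing a realized signaling strategy $\pi^s_t$, the expected revenue from an $s$-signaler is linear in the pricing distribution (restricted to $\{\thetalow,\thetahigh\}$) and reduces to a comparison of $\thetalow$ versus $\thetahigh \cdot P(\thetahigh \mid s)$, where $P$ is the induced posterior on buyer types. Under $\truthstrat$, $P(\thetahigh \mid \slow)=0$ and $P(\thetahigh \mid \shigh)=1$, so $\pdstrat$ is uniquely optimal. Under $\stratstrat$ with $q^*<1$, the definition of $q^*$ from Theorem~\ref{thm:single stage equilibrium} exactly equates the two options on $\slow$-signalers, and $P(\thetahigh \mid \shigh)=1$ still selects $\thetahigh$ on $\shigh$-signalers. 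Under $\stratstrat$ with $q^*=1$, the regime $\thetalow \geq \mu\thetahigh$ makes $\thetalow$ optimal on the fully pooled $\slow$-signalers. Hence $\pdstrat$ maximizes $U_S(\pi^s_t,\cdot,\pi^b_t)$ per round, so $\sum_t U_S(\pi^s_t,\pdstrat,\pi^b_t)$ attains the supremum in the regret definition and $R_T^S \equiv 0$.

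For part (ii), fix $\epsilon > 0$ and apply Lemma~\ref{lemma:consistency_implication} with slack $1-\alpha^* > 0$ to obtain $N$ such that $\P(\alphahat_t \leq \alpha^*) \leq \epsilon$ for all $t > N$. Conditional on $\alphahat_t > \alpha^*$, the buyers play $\stratstrat$ and the expected per-round utility with $\pdstrat$ equals exactly $u_S(1)$ by the same computation underlying Theorem~\ref{thm:single stage equilibrium}; on the complementary event, buyers play $\truthstrat$ and the per-round utility equals the constant $M := (1-\mu)\thetalow + \mu\thetahigh$. A direct algebraic check yields $M - u_S(1) = \mu q^*(\Delta\theta + c_S) > 0$, so $\E[U_S(\pi_t)] \leq u_S(1) + \epsilon(M - u_S(1))$ for $t > N$. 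Splitting the Cesaro sum at $N$ and letting $T \to \infty$ gives $\limsup_T (1/T)\sum_t \E[U_S(\pi_t)] \leq u_S(1) + \epsilon(M-u_S(1))$; sending $\epsilon \to 0$ yields the asymptotic upper bound $u_S(1)$. The matching lower bound $\liminf \geq u_S(1)$ is immediate from $\E[U_S(\pi_t)] \geq u_S(1)$ in every round (both buyer modes yield at least $u_S(1)$ against $\pdstrat$), establishing convergence to exactly $u_S(1)$.

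The main obstacle is handling the probabilistic nature of Lemma~\ref{lemma:consistency_implication}: on the exceptional set $\{\alphahat_t \leq \alpha^*\}$ the per-round utility is actually \emph{larger} than $u_S(1)$, so one must verify that this exceptional mass shrinks fast enough for the Cesaro average not to drift upward. Once the three best-response sub-regimes ($\truthstrat$, $\stratstrat$ with $q^*<1$, $\stratstrat$ with $q^*=1$) are worked out, the no-regret claim becomes essentially trivial from the per-round optimality.
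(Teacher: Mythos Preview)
Your proposal is correct and follows essentially the same approach as the paper: for (i), both arguments observe that $\pdstrat$ is the per-round best response against either \cber-buyer mode (the paper simply cites the equilibrium computation in Theorem~\ref{thm:single stage equilibrium}, whereas you spell out the three sub-regimes); for (ii), both use $\alpha_t\equiv1$ together with Lemma~\ref{lemma:consistency_implication} to force $\alphahat_t>\alpha^*$ eventually, so that buyers play $\stratstrat$ and per-round utility becomes $u_S(1)$. The only packaging difference is that the paper splits the time horizon at $\sqrt{T}$ and conditions on the event $\{\alphahat_t>\alpha^*\ \text{for all}\ t>\sqrt{T}\}$, while you bound $\P(\alphahat_t\le\alpha^*)$ per round and pass to the Ces\`aro limit; your route is slightly more direct and avoids the paper's event-complement bookkeeping. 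Your additional lower-bound argument (giving convergence to exactly $u_S(1)$, not merely an upper bound) is a correct strengthening that the paper does not state.
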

\begin{proof}[Proof sketch]

The strategy of \cber-buyers in each round is either $\truthstrat$ or $\stratstrat$.
For both these buyer responses, the seller's optimal strategy is to always price discriminate, as shown in the computation of the seller's equilibrium response in the proof of Theorem \ref{thm:single stage equilibrium}. 
In other words, the seller incurs zero regret in each round by always price-discriminating.

Next, we analyze the seller's average utility. Note that when $\pi^p_t = \pdstrat$, the probability of seeing different prices for different signals is $\alpha_t=1$, so $\bar{\alpha}_t = 1$ for all $t$. By Lemma \ref{lemma:consistency_implication}, $\alphahat_t$ becomes greater than $\alpha^*$ eventually (where $\alpha^*$ is as defined in Corollary \ref{cor:stackelberg}), which causes $\thetahigh$-buyers to play $\stratstrat$. 
In other words, eventually the seller and buyers will all be playing their equilibrium strategies for the PD-game with $\alpha=1$, so their average utilities will converge to the corresponding equilibrium utilities. 
See Appendix~\ref{proof:regret min may not increase utility} for the full proof.
\end{proof}

The next proposition tells us that regret minimization necessarily causes the seller to achieve a worse expected average utility that the optimal utility they can achieve in the single interaction setting.

\begin{proposition}[Regret minimization is inherently at odds with achieving \stackutility]\label{prop:regretOddsStack}
    Given \cber-buyers, 
    for any no-regret seller algorithm,
    the seller's average utility asymptotically converges to strictly less than \stackutility. 
\end{proposition}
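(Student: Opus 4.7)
The plan is to bound the seller's asymptotic utility by an expression $f V_{tr} + (1-f) u_S(1)$ depending on the asymptotic fraction $f$ of rounds in which $\thetahigh$-buyers play $\truthstrat$, and then show this bound is strictly less than $\mathbb{U}^*_1$. The first ingredient is to observe that $\pdstrat$ is simultaneously optimal against both possible buyer strategies: against $\truthstrat$ it extracts the maximum per-buyer revenue $V_{tr} := \mu\thetahigh + (1-\mu)\thetalow$, while against $\stratstrat$ it attains $u_S(1)$ (tied with other strategies via the indifference construction that defines $q^*$). Consequently, writing $f_T$ for the empirical fraction of truthful rounds up to time $T$, the best fixed strategy in hindsight (witnessed by $\pdstrat$) achieves time-average utility $f_T V_{tr} + (1-f_T)u_S(1)$, which also upper-bounds the average per-round utility of any seller policy. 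The no-regret hypothesis then pins the seller's asymptotic average utility to $f V_{tr} + (1-f) u_S(1) + o(1)$ whenever $f_T \to f$.

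The second ingredient constrains $f$ via self-consistency. Decompose $\bar{\alpha}_T = f_T \alpha^{tr}_T + (1-f_T)\alpha^{st}_T$, where $\alpha^{tr}_T$ and $\alpha^{st}_T$ are the averages of $\alpha_t$ over truthful and $\stratstrat$ rounds, respectively. Since $\pdstrat$ is the unique maximizer against truthful buyers, no-regret restricted to those rounds forces $\alpha^{tr}_T \to 1$ whenever $f$ is bounded away from $0$. Meanwhile, by Lemma \ref{lemma:consistency_implication}, $\alphahat_t$ concentrates around $\bar{\alpha}_t$, so asymptotically the sign of $\bar{\alpha}_t - \alpha^*$ dictates whether buyers play $\truthstrat$ or $\stratstrat$.

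A three-way case analysis on the limiting behavior of $\bar{\alpha}_T$ then completes the argument. If $\limsup_T \bar{\alpha}_T < \alpha^*$, buyers are eventually truthful, so $f=1$, but then $\alpha^{tr}_T \to 1$ forces $\bar{\alpha}_T \to 1 > \alpha^*$, a contradiction. If $\liminf_T \bar{\alpha}_T > \alpha^*$, buyers eventually play $\stratstrat$, so $f=0$ and the asymptotic utility equals $u_S(1)$, which is strictly below $u_S(\alpha^*) = \mathbb{U}^*_1$ by Corollary \ref{cor:order of utilities}. Otherwise $\bar{\alpha}_T$ clusters around $\alpha^*$; combining $\alpha^{tr}_T \to 1$ with $\bar{\alpha}_T \to \alpha^*$ yields $f + (1-f)\alpha^{st}_\infty = \alpha^*$ with $\alpha^{st}_\infty \in [0,1]$, which forces $f \leq \alpha^*$. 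Therefore
\[ f V_{tr} + (1-f) u_S(1) \leq \alpha^* V_{tr} + (1-\alpha^*) u_S(1) < \alpha^* V_{tr} + (1-\alpha^*) u_S(0) = \mathbb{U}^*_1, \]
where the strict inequality invokes $u_S(0) > u_S(1)$ from Corollary \ref{cor:order of utilities}. The main obstacle will be rigorously handling the third case when $\bar{\alpha}_T$ oscillates rather than converges: this will require a subsequence argument together with the probabilistic convergence in Lemma \ref{lemma:consistency_implication} to promote the pointwise relation $f \leq \alpha^*$ to a uniform asymptotic bound on the expected time-average utility.
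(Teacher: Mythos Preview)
Your proposal is correct and follows the same three-part skeleton as the paper: (i) upper-bound average seller utility by $f_T\,V_{tr}+(1-f_T)\,u_S(1)$ via the observation that $\pdstrat$ is best against both buyer strategies (the paper's Lemma~\ref{lem:lowTruthfulLowUtility}); (ii) use no-regret to force near-full price discrimination on truthful rounds, i.e.\ $\alpha^{tr}_T\to 1$ (the paper's Lemma~\ref{lem:ImplicationNoRegret}); (iii) use consistency of $\alphahat_t$ to cap the limiting truthful fraction at $\alpha^*$ (the paper's Lemma~\ref{lem:consistencySimultaneousPDTruthful}); and the final strict inequality $u_S(1)<u_S(0)$ is identical to the paper's. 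The one genuine organizational difference is in step (iii): you proceed by a trichotomy on the limiting behavior of $\bar{\alpha}_T$ and correctly flag the oscillating case as the delicate one, whereas the paper sidesteps this entirely by working at the \emph{last} index $t^*\in\mathcal{T}$ rather than at $T$. Since $t^*\in\mathcal{T}$ forces $\alphahat_{t^*}\le\alpha^*$, consistency applied at that single time gives $\bar{\alpha}_{t^*}\le\alpha^*$ directly, and then $\sum_{t\in\mathcal{T}}\alpha_t\le\sum_{t\le t^*}\alpha_t=t^*\bar{\alpha}_{t^*}$ yields the bound $\tfrac{1}{T}\sum_{t\in\mathcal{T}}\alpha_t\le\alpha^*$ with no case split and no subsequence argument. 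This last-index device is worth adopting, as it cleanly resolves exactly the obstacle you identified; your route would ultimately work but requires nontrivial bookkeeping to control $\limsup_T f_T$ when neither $\bar{\alpha}_T$ nor $f_T$ converges.
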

\begin{proof}[Proof sketch]
Define $\mathcal{T} = \{t \in [T]: \alphahat_t \leq \alpha^*\}$ to be the set of rounds
where $\thetahigh$-buyers' signaling strategy is $\truthstrat$. In all other rounds, their signaling strategy is $\stratstrat$. Define $\beta = (1/T) \sum_{t \in \mathcal{T}} \alpha_t$ to be a measure of simultaneous truthfulness from buyers and price-discrimination by the seller. Our proof involves the following parts. We outline the parts and state them as lemmas here and prove them in Appendix~\ref{proof:regretOddsStack} 

\begin{enumerate}
    \item Obtaining \stackutility requires the buyers to be truthful strictly more than $\alpha^*$ fraction of rounds.
    \begin{lemma}\label{lem:lowTruthfulLowUtility}
        $\lim_{T \rightarrow \infty} |\mathcal{T}| / T \leq \alpha^*$ implies that $\lim_{T \rightarrow \infty} \left (\sum_{t=1}^T U_S(\pi_t) \right ) / T < \text{\stackutility}$.
     \end{lemma}

     \item The no regret property requires that the seller price discriminates in most rounds where buyers are truthful. So $\beta$ is close to $|\mathcal{T}| / T$.

      \begin{lemma}\label{lem:ImplicationNoRegret}
    $\lim_{T \rightarrow \infty} {|\mathcal{T}|} / T \leq \lim_{T \rightarrow \infty}   {\sum_{t \in \mathcal{T}} \alpha_t} / T$. 
    \end{lemma}

    \item There is a limit on simultaneous price-discrimination and truthful signaling due to the buyers' consistent beliefs. That is, $\beta$ converges to at most $\alpha^*$.
    \begin{lemma}\label{lem:consistencySimultaneousPDTruthful}
        $\lim_{T \rightarrow \infty} \sum_{t \in \mathcal{T}} \alpha_t / T \leq \alpha^*.$
    \end{lemma}
\end{enumerate}

From Lemmas~\ref{lem:ImplicationNoRegret},~\ref{lem:consistencySimultaneousPDTruthful}, $\lim_{T \rightarrow \infty} |\mathcal{T}| / T \leq \alpha^*$. Lemma~\ref{lem:lowTruthfulLowUtility} shows that this means average seller utility is strictly less than \stackutility.
  
\end{proof}

\subsubsection{Policy-regret-minimizing seller}
As we have seen, regret minimization does not guarantee that the seller achieves higher than price-discrimination utility. On the other hand, if we model the seller as minimizing policy regret \citep{arora2020policy}, the seller \emph{necessarily} achieves utility that is higher than the utility achieved by the naive strategy of always price discriminating. 

\begin{definition}[Seller's policy regret]
     Consider a buyer algorithm $\mathcal{A}_B$ and a seller algorithm $\mathcal{A}_S$. Let $(\pi_t(\mathcal{A}_B, \mathcal{A}_S))_{t=1}^T$ be the sequence of mixed strategies generated by the interaction between $\mathcal{A}_B$ and $\mathcal{A}_S$. 
     Given a sequence of mixed strategies $(\pi_t)_{t=1}^T$, the \emph{seller's average policy regret} of $(\pi_t)_{t=1}^T$ relative to a buyer algorithm $\mathcal{A}_B$ and a baseline class $\mathbb{A}_S$ of seller algorithms is 
    \[PR^S_T \left ((\pi_t)_{t=1}^T; \mathcal{A}_B, \mathbb{A}_S \right) = \max_{\mathcal{A}_S \in \mathbb{A}_S} \frac 1 T \sum_{t=1}^T U_S( \pi_t(\mathcal{A}_B, \mathcal{A}_S)) - \frac 1 T \sum_{t=1} U_S(\pi_t) \]
\end{definition}

\begin{definition}[No-policy-regret algorithm]
Let $\mathcal{A}_B$ be an algorithm employed by the buyer in the repeated PD protocol. An algorithm $\mathcal{A}_S$ is a \emph{no-policy-regret algorithm} for the seller given $\mathcal{A}_B$ and relative to a class of seller algorithms $\mathbb{A}_S$ if the sequence of mixed strategies $(\pi_t(\mathcal{A}_B, \mathcal{A}_S))_{t=1}^T$ generated by the interaction between $\mathcal{A}_B$ and $\mathcal{A}_S$ satisfies $PR^S_T((\pi_t(\mathcal{A}_B, \mathcal{A}_S); \mathcal{A}_B, \mathbb{A}_S)_{t=1}^T) \in o(1).$
\end{definition}

Consider a baseline class $\mathbb{A}_S^{MS}$ consisting of seller algorithms that employ the same mixed strategy in each round, that is, $\pi_t^p(\cdot|\shigh)$ is the same distribution for all $t$ and similarly for $\pi_t^p(\cdot|\slow)$.
\begin{proposition}[Policy-regret-minimizing seller achieves \stackutility]\label{prop:policyRegretStack}
Given \cber-buyers, 
if the seller achieves sub-linear policy regret relative to $\mathbb{A}^{MS}_S$, then 
the seller's average utility asymptotically converges to at least \stackutility. 
\end{proposition}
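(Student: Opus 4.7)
The plan is to exhibit a baseline algorithm in $\mathbb{A}^{MS}_S$ whose limiting average utility approaches \stackutility, and then invoke the sub-linear policy-regret hypothesis to transfer this guarantee to whatever algorithm the seller is actually running. For any small $\epsilon \in (0,\alpha^*)$, define $\mathcal{A}_\epsilon \in \mathbb{A}^{MS}_S$ to be the stationary algorithm that in every round plays the mixed pricing strategy $(\alpha^*-\epsilon)\pdstrat + (1-\alpha^*+\epsilon)\nopdstrat$. Running $\mathcal{A}_\epsilon$ against \cber-buyers produces $\alpha_t = \alpha^*-\epsilon$ at every round, so the target $\bar{\alpha}_T$ for the buyers' estimator is identically $\alpha^*-\epsilon$. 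Consistency together with Lemma~\ref{lemma:consistency_implication} then implies $\P(\alphahat_t > \alpha^*) \to 0$ as $t \to \infty$, so the fraction of rounds in which $\thetahigh$-buyers deviate from $\truthstrat$ is $o(1)$ in expectation.

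Next I would compute the per-round utility that $\mathcal{A}_\epsilon$ delivers on a truthful round. Under truthful signaling the price-discriminating branch yields per-buyer expected revenue $\mu\thetahigh + (1-\mu)\thetalow$, while the non-discriminating branch (which sets price $\thetalow$ or $\thetahigh$ according to the case analysis in Theorem~\ref{thm:single stage equilibrium}(d)) yields $\max(\thetalow,\mu\thetahigh)$. Hence the expected per-round utility on a truthful round is
\[ n\bigl[(\alpha^*-\epsilon)(\mu\thetahigh + (1-\mu)\thetalow) + (1-\alpha^*+\epsilon)\max(\thetalow,\mu\thetahigh)\bigr], \]
which is continuous in $\epsilon$ and equals \stackutility at $\epsilon=0$. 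Combined with the vanishing fraction of non-truthful rounds and the boundedness of single-round utilities, the average utility of $\mathcal{A}_\epsilon$ against \cber-buyers asymptotically converges to \stackutility${}-O(\epsilon)$.

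Finally, sub-linear policy regret relative to $\mathbb{A}^{MS}_S$ gives $\frac{1}{T}\sum_t U_S(\pi_t) \geq \frac{1}{T}\sum_t U_S(\pi_t(\mathcal{A}_B,\mathcal{A}_\epsilon)) - o(1)$ for each fixed $\epsilon$. Taking expectations, sending $T\to\infty$, and then $\epsilon\to 0$ yields $\liminf_T \E[\tfrac{1}{T}\sum_t U_S(\pi_t)] \geq$ \stackutility, which is the desired asymptotic lower bound.

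The main obstacle is the discontinuity in \cber-buyer behavior at $\alpha^*$: the signaling strategy switches from $\truthstrat$ to $\stratstrat$ exactly as $\alphahat_t$ crosses $\alpha^*$. Using the ``exact commitment'' baseline with $\alpha=\alpha^*$ would leave $\alphahat_t$ straddling $\alpha^*$, causing $\stratstrat$ to be played in a non-vanishing fraction of rounds and wiping out the commitment gain. The $\epsilon$-perturbation sidesteps this by pushing the concentration point of $\alphahat_t$ strictly below $\alpha^*$; the tight \stackutility bound is recovered only after sending $\epsilon\to 0$ in the final step.
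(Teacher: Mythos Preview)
Your proposal is correct and follows essentially the same argument as the paper's proof: both exhibit a stationary baseline that price-discriminates with probability strictly below $\alpha^*$ (your $\alpha^*-\epsilon$, the paper's $\tilde{\alpha}$), use consistency of the buyers' beliefs to argue that the fraction of non-truthful rounds vanishes, compute that the baseline's limiting average utility is within $O(\epsilon)$ of \stackutility, and then invoke sub-linear policy regret before sending $\epsilon \to 0$. Your explicit discussion of the discontinuity at $\alpha^*$ makes transparent exactly why both proofs must work with a level strictly below $\alpha^*$ rather than at $\alpha^*$ itself.
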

\begin{proof}[Proof sketch]
    Under the conditions of this proposition, the seller's utility must, by definition of policy regret, approach a utility at least as high (or better) than the utility of any strategy in $\mathbb{A}_S^{MS}$ as $T \to \infty$. Recall that \stackutility is the seller utility achieved in the PD game when $\alpha = \alpha^*$.
    Consider the PD game that results in a seller utility of at least $\text{\stackutility}- \epsilon$, which is achieved by the seller price-discriminating with probability $\tilde{\alpha} < \alpha^*$. Then the repeated-interaction strategy of always price-discriminating with probability $\tilde{\alpha}$ has an average expected utility of at least $\text{\stackutility}- \epsilon$ (this must be true due to the consistency of buyer beliefs; see the full proof in Appendix \ref{sec:proof_of_policyRegretStack} for details). Taking $\epsilon$ to 0 gives the desired result.
\end{proof}

Combining the previous result with the following result tells us that a no-policy regret seller's algorithm will cause the seller's average utility to asymptotically converge to \emph{exactly} \stackutility. In fact, this result tells us the stronger result that there does not exist \emph{any} seller algorithm that can achieve utility higher than \stackutility.

\begin{proposition}\label{prop:noBetterThanStack}
    Given \cber-buyers, for any seller algorithm, the seller's average utility asymptotically converges to at most \stackutility. 
\end{proposition}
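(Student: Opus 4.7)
The plan is to bound the seller's per-round expected utility using the fact that \cber-buyers always play one of the two single-stage equilibrium responses, and then aggregate across rounds by invoking the consistency-based Lemma~\ref{lem:consistencySimultaneousPDTruthful} that was already established inside the proof of Proposition~\ref{prop:regretOddsStack} (that lemma only uses consistency of buyer beliefs, not any regret property of the seller, so it applies verbatim here). As a WLOG reduction I would first restrict attention to seller pricing distributions supported on $\{\thetalow,\thetahigh\}$; any other price is weakly dominated, and the buyers' consistent estimator observes only whether two prices differ, so this restriction loses no utility. Let $\mathcal{T} = \{t : \alphahat_t \leq \alpha^*\}$ be the (random) set of rounds where $\thetahigh$-buyers play $\truthstrat$, and $\mathcal{T}^c$ the rounds where they play $\stratstrat$.

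For $t \in \mathcal{T}$ no evasion occurs, and checking the four rational two-price strategies gives the conditional bound $\E[U_S(\pi_t)\mid t\in\mathcal{T}] \le \alpha_t\,(\mu\thetahigh + (1-\mu)\thetalow) + (1-\alpha_t)\,u_{SB}$, where $u_{SB} = \max(\thetalow,\mu\thetahigh)$ is the signal-blind optimum. For $t \in \mathcal{T}^c$ the buyers play $\stratstrat$, so by definition of the single-stage equilibrium at $\alpha=1$ the seller's expected utility is at most $u_S(1)$ (this bound already absorbs the $c_S$ evasion costs). Writing these as an almost-sure per-round inequality $U_S(\pi_t) \le \ind{t\in\mathcal{T}}[\alpha_t\,(\mu\thetahigh + (1-\mu)\thetalow) + (1-\alpha_t)u_{SB}] + \ind{t\in\mathcal{T}^c}\,u_S(1)$, averaging, and taking expectations yields
\begin{equation*}
\E\!\left[\tfrac{1}{T}\sum_{t=1}^T U_S(\pi_t)\right] \le a_T\,(\mu\thetahigh + (1-\mu)\thetalow) + (\tau_T - a_T)\,u_{SB} + (1-\tau_T)\,u_S(1),
\end{equation*}
with $a_T = \E[(1/T)\sum_{t\in\mathcal{T}}\alpha_t]$ and $\tau_T = \E[|\mathcal{T}|/T]$, satisfying $0 \le a_T \le \tau_T \le 1$.

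I would then invoke Lemma~\ref{lem:consistencySimultaneousPDTruthful} to get $\limsup_T a_T \le \alpha^*$ and finish by maximizing the RHS over the feasible region $\{0 \le a \le \min(\tau,\alpha^*),\ 0 \le \tau \le 1\}$. The RHS is linear in $(a,\tau)$ with partial derivatives $\partial_a = (\mu\thetahigh + (1-\mu)\thetalow) - u_{SB} > 0$ (immediate in both branches of $u_{SB}$) and $\partial_\tau = u_{SB} - u_S(1) > 0$ (since $u_{SB} = u_S(0) > u_S(1)$ by Corollary~\ref{cor:order of utilities}). Both signs force the maximum to the corner $a = \alpha^*$, $\tau = 1$, where the value is $\alpha^*(\mu\thetahigh + (1-\mu)\thetalow) + (1-\alpha^*)\,u_{SB}$, which is exactly \stackutility (this is the utility in Corollary~\ref{cor:stackelberg}). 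Taking $\limsup$ completes the proof.

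The main obstacle is the randomness of $\mathcal{T}$, since its elements are not independent of the $\alpha_t$ and the buyers' evolving beliefs $\alphahat_t$ depend on the entire history. Phrasing the per-round bound as an almost-sure indicator-valued inequality sidesteps this cleanly so that expectations and averaging pass through without needing joint-distribution arguments. A secondary point worth being careful about is the $t \in \mathcal{T}^c$ bound: one must argue that $u_S(1)$ really is an upper bound on the seller's expected utility against any (possibly off-equilibrium) play when the buyer side is fixed at $\stratstrat$, which follows because $u_S(1)$ is defined as the value of the seller's best response in that single-stage subgame.
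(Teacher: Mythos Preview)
Your argument is correct and rests on the same two ingredients as the paper's proof: the split into $\mathcal{T}$ versus $\mathcal{T}^c$ and the consistency-based Lemma~\ref{lem:consistencySimultaneousPDTruthful}, which indeed does not require any regret assumption on the seller.

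The packaging differs in a useful way. The paper subtracts $\mathbb{U}^*_1$ round by round, bounds $U_S(\truthstrat,\pi^p_t) - U_S(\truthstrat,\pi^{p*})$ in terms of $\pi^p_t(\thetahigh\mid\shigh) - \pi^p_t(\thetahigh\mid\slow)$, and then invokes Lemma~\ref{lem:relnPDMeasures} to pass to $\alpha_t$; it also carries out the computation only for the regime $\thetalow \leq \mu\thetahigh$ and leaves the other regime to the reader. Your four-pure-strategy decomposition bounds the truthful-round utility directly by $\alpha_t(\mu\thetahigh+(1-\mu)\thetalow)+(1-\alpha_t)u_{SB}$, which bypasses Lemma~\ref{lem:relnPDMeasures} entirely, and your use of $u_{SB}=\max(\thetalow,\mu\thetahigh)$ together with the final linear maximization in $(a,\tau)$ handles both parameter regimes uniformly. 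The endpoint is the same, but your route is a bit more self-contained and avoids the case split.
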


\begin{proof}[Proof sketch]
    This proof is similar to the argument of the proof of Proposition~\ref{prop:regretOddsStack} and the full proof is in Appendix~\ref{proof:noBetterThanStack}. The key ideas again are that for high seller utility, there must be sufficiently many rounds where simultaneously, the seller price discriminates and the buyer reports truthfully. Since the buyers' belief estimators are consistent, this cannot be the case. The difference between the average seller utility and \stackutility is a constant times the following quantity:
    $\frac 1 T \sum_{t \in \mathcal{T}} (\pi^p_t(\thetahigh | \shigh) - \pi^p_t(\thetalow | \slow)) - \alpha^*$, where $\mathcal{T}$ is the set of rounds where the buyer signals truthfully.
    Lemma~\ref{lem:relnPDMeasures},~\ref{lem:consistencySimultaneousPDTruthful} (from the proof of Proposition~\ref{prop:regretOddsStack}) show that the consistency property implies that this difference converges to most zero. 
\end{proof}

\section{Experiments}
In this section, we simulate Algorithm \ref{alg:repeated_pd} with $\mu=0.5$, $\thetalow=5$, $\thetahigh=15$, $c_B=c_S=5$, and $n=10$
and empirically verify our theoretical claims from Section \ref{sec:repeated_interaction}. We report the convergence of buyer and seller utilities, seller actions, and buyer estimators. The seller and buyer algorithms we consider are described below. Code is available at
\url{https://github.com/nivasini/PrivacyDynamics}.

\subsection{Algorithms}
\subparagraph{Seller}
\begin{enumerate}
    \item \textbf{Signal-blind seller.} The seller plays the regret-minimizing Exp3 algorithm (specifically Exp3-IX in Chapter 12 of \cite{lattimorebandits}). At round $t$ the seller sets a price $p_t \in \{\thetalow,\thetahigh\}$ according to the algorithm's current sampling distribution, charges $p_t$ to all buyers and updates the sampling distribution based on the resulting average utility from the buyers' purchase decisions.
    \item \textbf{Signal-aware seller.} The seller plays a contextual version of Exp3, which we call CExp3, in which the algorithm maintains two sampling distributions over prices $\{\thetalow,\thetahigh\}$, conditioned on the received signal, $\slow$ or $\shigh$. At each round, the seller samples once from each distribution and charges one price $\underline{p}_t$ to all buyers who signal $\slow$ and $\overline{p}_t$ to all buyers who signal $\shigh$. Depending on the sampling distributions, $\underline{p}_t$ and $\overline{p}_t$ may or may not be equal.
    \item \textbf{Stackelberg equilibrium seller.} The seller commits to an $\alpha^*=\nicefrac{c_B}{\Delta\theta}$ level of price-discrimination, i.e., they play the $(\alpha=1)$-PD equilibrium strategy (Theorem \ref{thm:single stage equilibrium}) with probability $\alpha^*$ and the $(\alpha=0)$-PD equilibrium with probability $1-\alpha^*$.
\end{enumerate}
\subparagraph{\cber-Buyer.} Using a sequence of consistent estimators $\{\alphahat_{\tau}\}_{\tau=1}^{t-1}$ (Def. \ref{def:consistentbeliefs}) to estimate the seller's probability of price-discrimination at each round, each buyer plays the $(\alpha=\alphahat_{\tau})$-PD equilibrium strategy. For our simulations, buyers use the estimator described in \eqref{eq:consistent-estimator} to estimate the seller's probability of price discrimination at each round. All buyers in a single round use the same estimator.

\subsection{Discussion}
\subparagraph{Convergence of Utilities.}
Figure \ref{fig:convergence of utilities} shows convergence of seller and buyer utilities for each of the seller's algorithms played against a \cber-buyer. As expected, when a seller plays Exp3 (which ignores signals) against a \cber-buyer, the players' utilities converge to the $(\alpha=0)$-PD equilibrium utility (Theorem \ref{thm:single stage equilibrium}). When the seller plays CExp3 (which observes signals) against a \cber-buyer, the seller's utility converges to the $(\alpha=1)$-PD equilibrium utility. Given our experiment parameters, multiple different distributions $\pi^p_t(\cdot | s=\slow)$ reward the seller equivalently, while some are more favorable for the buyer than others. Therefore, while the seller's utility will always converge to $(\alpha=1)$-PD, the buyer's utility may converge to something less than $(\alpha=1)$-PD. Finally, when the seller plays the Stackelberg equilibrium against a \cber-buyer, the players' utilities converge to the $(\alpha=\alpha^*)$-PD equilibrium utility.

\begin{figure}[!t]
  \centering
  \includegraphics[scale=0.4]{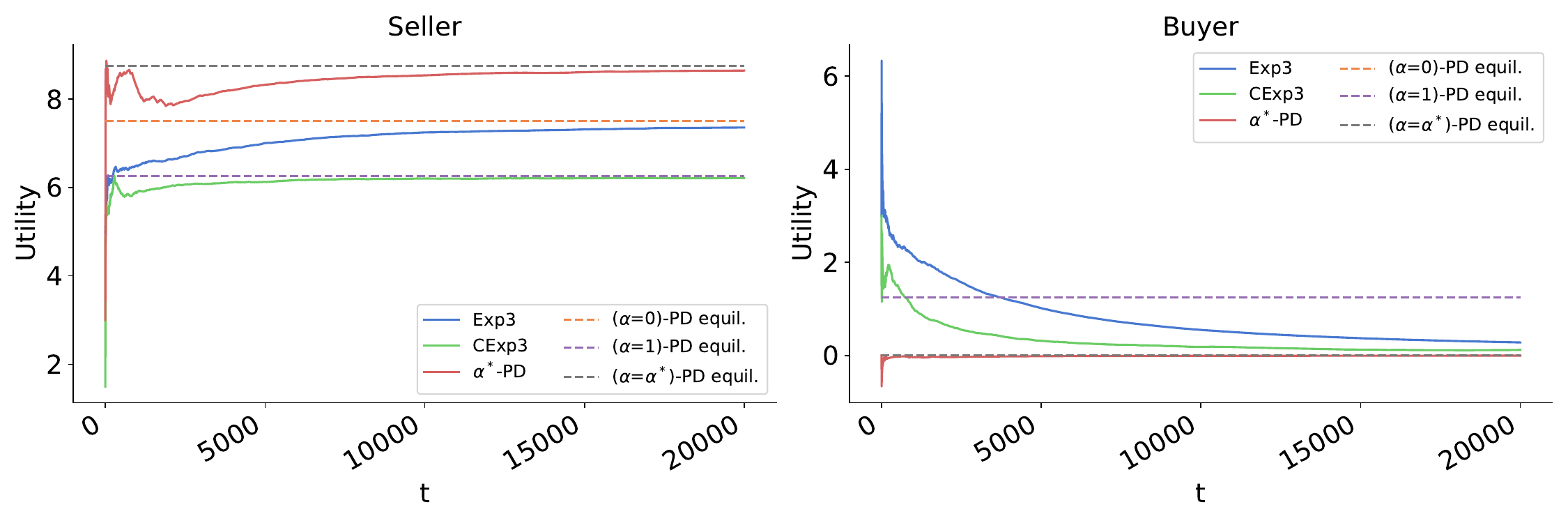}
  \caption{Convergence of seller and buyer utilities for various algorithms. $\thetalow < \mu\thetahigh$ with our experiment parameters, so the buyer's $(\alpha=0)$-PD and $(\alpha=\alpha^*)$-PD utilities are the same (see Corollary \ref{cor:order of utilities}).}
  \label{fig:convergence of utilities}
\end{figure}

\subparagraph{Consistency of $\hat{\alpha}$.}
Figure \ref{fig:consistency of alpha hat} illustrates the consistency of the buyer's estimator (\eqref{eq:consistent-estimator}). Our simulations show that the buyer's estimate $\hat{\alpha}_t$ of the seller's probability of price discrimination converges to $0$ against a seller playing Exp3, to $0.5$ against a seller playing $\alpha^*$-PD (where $\alpha^*=\nicefrac{c_B}{\Delta\theta}=0.5$ given our simulation parameters), and to higher-than-$0.5$ against a seller playing CExp3. Importantly, $\hat{\alpha}_t$ aligns with the seller's true average probability of price-discrimination, $\bar{\alpha}_t$, giving empirical evidence for Lemma \ref{lemma:consistency_implication}.

\begin{figure}[!t]
  \centering
  \includegraphics[scale=0.5]{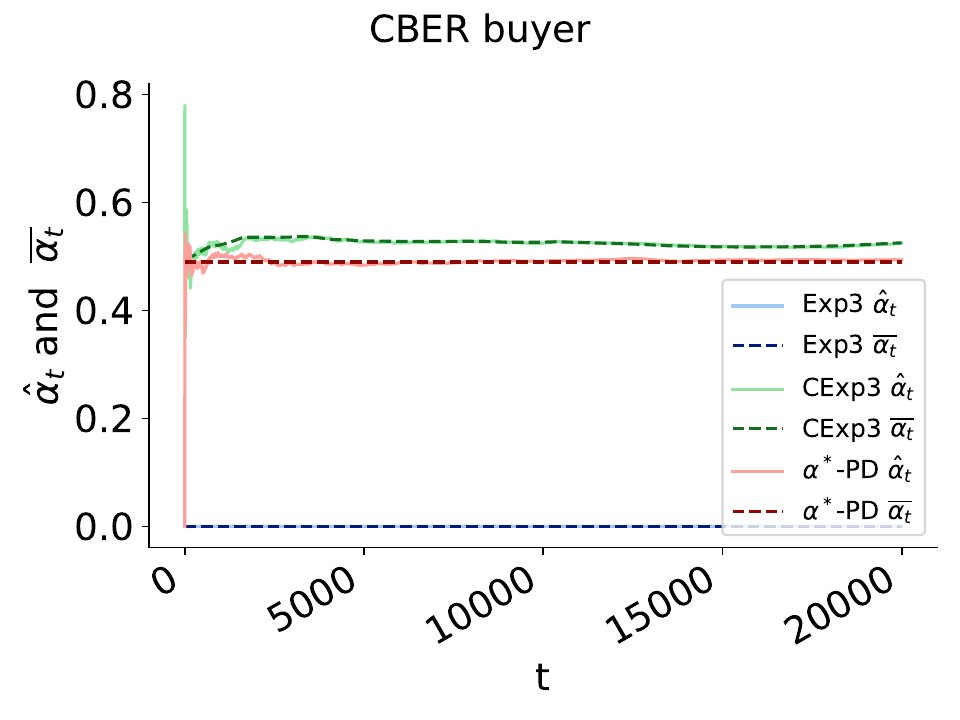}
  \caption{$\alphahat_t$ and $\bar{\alphahat}_t$ over time when seller is playing Exp3, CExp3, or $\alpha^*$-PD. In all cases, $\alphahat$ is a consistent estimator of the seller's true probability of price discrimination.}
  \label{fig:consistency of alpha hat}
\end{figure}

\subparagraph{Convergence of Seller Actions.}
In Figure \ref{fig:convergence of seller actions}, we track the cumulative proportion of the seller's price-discriminatory vs. non-price-discriminatory actions. Specifically, we track four seller actions: 1) charging a high price regardless of signal, 2) charging a low price regardless of signal, 3) charging a high price for a low signal and low price for a low signal (PD), and 4) charging a low price for a high signal and a high price for a low signal (reversePD). Given our parameter values for these simulations (i.e. $\thetalow < \mu\thetahigh$ and $\alpha^*=0.5$), in equilibrium we would expect that, for each batch of $n$ buyers at a single round: 1) a signal-blind seller sets a high price for all $n$ buyers, 2) a signal-aware seller sets a high price for high-signal buyers and a low price for low-signal buyers, and 3) a $\alpha^*$-PD seller sets a high price for all high-signal buyers and low price for all low-signal buyers with probability $0.5$ and sets a high price for all $n$ buyers with probability $0.5$. Figure \ref{fig:convergence of seller actions} gives empirical evidence for this intuition.

\begin{figure}[!t]
  \centering
  \includegraphics[scale=0.4]{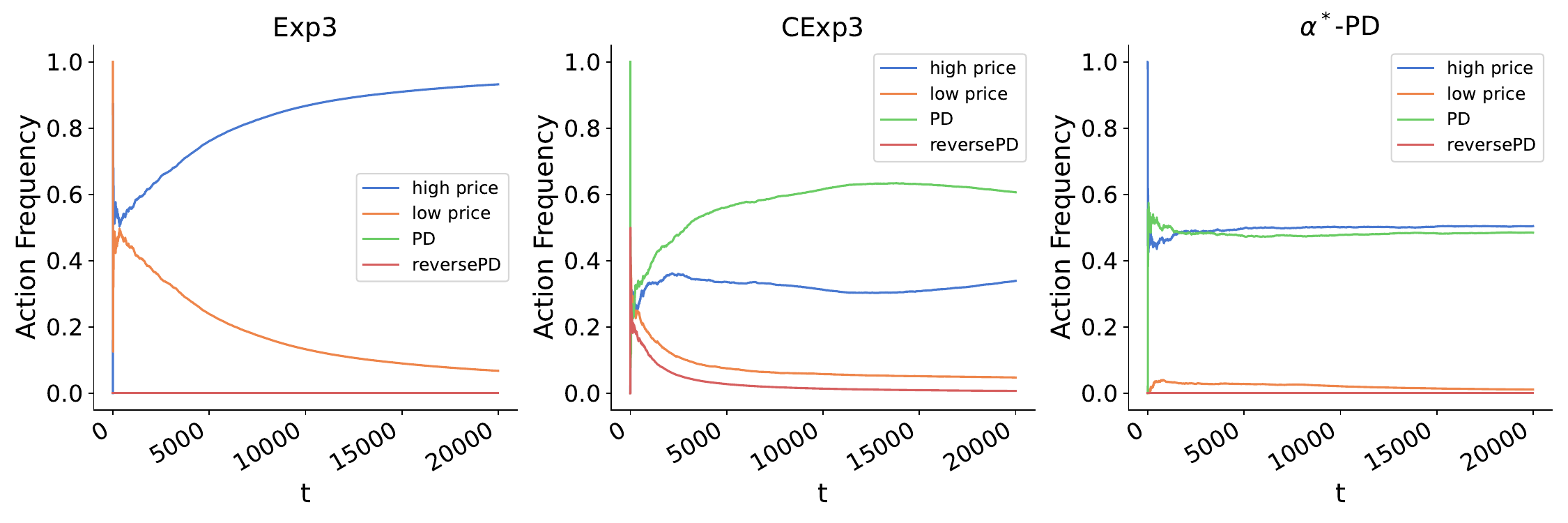}
  \caption{Relative frequency of actions for the seller playing Exp3, CExp3 and $\alpha^*$-PD. The number of PD and reversePD actions for the Exp3 seller are both $0$, as is expected.}
  \label{fig:convergence of seller actions}
\end{figure}

\subparagraph{Biased $\hat{\alpha}$.}
In realistic settings, the buyer may not have a consistent estimate of price discrimination and instead only have access to a noisy $\hat{\alpha}$. Figure \ref{fig:noisy alpha_hats} examines whether a seller can benefit from non-consistency in the buyer's estimate.  the $y$-axis of the figure tracks the seller's cumulative average utility after $20,000$ rounds of interaction with \cber-buyers. We partition the interval $[-1,1]$ into twenty segments $\gamma_i$ of width $0.1$, and the buyers use estimator $\alphahat_t + \epsilon_t$, where $\epsilon_t \sim U(\gamma_i)$. The plot then tracks the seller's cumulative average utility after $20,000$ rounds of interaction with buyers for each noise interval $\gamma_i$. If $\hat{\alpha}_t + \epsilon_t$ is less than $0$ or greater than $1$, we clip it at those values respectively. In all cases, the seller is hurt by a $\thetahigh$-buyer who overestimates the probability of price discrimination (high values of $\epsilon_t$) and is thus more likely to evade, costing the seller the evasion cost. Against a buyer who underestimates the probability of price discrimination (low values of $\epsilon_t$), neither the Exp3 nor $\alpha^*$-PD seller gains utility, since the equilibrium behavior of the buyer with consistent $\hat{\alpha}_t$ aligns with the no-price-discrimination equilibrium (see Figure \ref{fig:convergence of utilities}). By contrast, the CExp3 seller benefits from a buyer who underestimates the probability of price discrimination, since the seller benefits from discriminatory pricing without incurring the evasion cost. Against a \cber-buyer with consistent estimates, this advantage is impossible at equilibrium.

\begin{figure}[!t]
  \centering
  \includegraphics[scale=0.5]{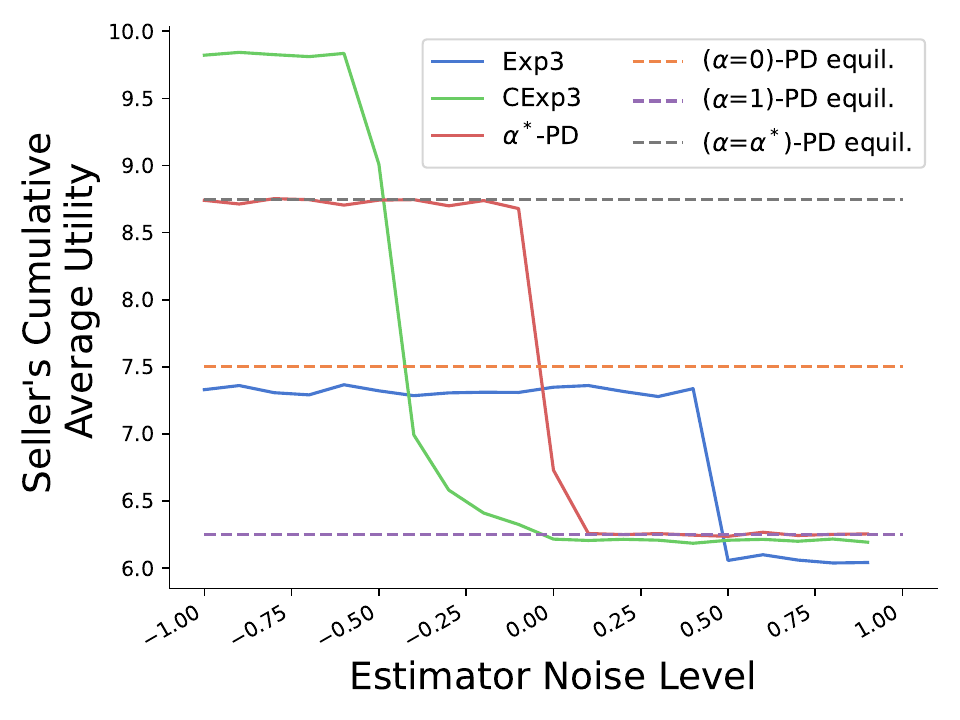}
  \caption{Cumulative average utility of the seller playing against \cber-buyers using biased $\hat{\alpha}$'s.}
  \label{fig:noisy alpha_hats}
\end{figure}
\section{Conclusion} 
Since the type and level of privacy desired generally depends on the utilities of stakeholders and forms of interaction among them, we propose a game theoretic framework for privacy in this paper. 
We analyzed the perfect Bayes Nash equilibrium in a single-interaction setting as well as no-regret and no-policy-regret dynamics emerging over repeated interactions.
In both these settings, we show how the different components of the game---utilities, actions and information sets (information available to players when choosing actions) impact the privacy levels that emerge. 

Our results shed light on the impacts of different privacy-related interventions---we showed that enabling a seller to credibly commit to privacy (e.g., through privacy legislation like the GDPR) or revealing the seller's past behavior (e.g., through privacy auditing) can surprisingly improve their utility. Thus, we believe our framework can be used to help analyze and craft privacy policies.

\section*{Acknowledgements} 

We thank Alireza Fallah and Stephen Bates for helpful discussions. TD acknowledges support from the National Science Foundation Graduate Research Fellowship Program under grant no.\ 2146752, SPK is partially supported by a Mobility Fellowship by the Swiss National Science Foundation. We also acknowledge support from the European Union (ERC-2022-SYG-OCEAN-101071601).

\bibliographystyle{plainnat}
\bibliography{bibliography}

\begin{thebibliography}{23}
\providecommand{\natexlab}[1]{#1}
\providecommand{\url}[1]{\texttt{#1}}
\expandafter\ifx\csname urlstyle\endcsname\relax
  \providecommand{\doi}[1]{doi: #1}\else
  \providecommand{\doi}{doi: \begingroup \urlstyle{rm}\Url}\fi

\bibitem[Acquisti and Varian(2004)]{acquisti2004}
Alessandro Acquisti and Hal Varian.
\newblock Conditioning prices on purchase history.
\newblock \emph{Marketing Science}, 2004.

\bibitem[Acquisti et~al.(2016)Acquisti, Taylor, and
  Wagman]{acquisti2016economics}
Alessandro Acquisti, Curtis Taylor, and Liad Wagman.
\newblock The economics of privacy.
\newblock \emph{Journal of Economic Literature}, 54\penalty0 (2):\penalty0
  442--492, 2016.

\bibitem[Arora et~al.(2020)Arora, Dinitz, Marinov, and Mohri]{arora2020policy}
Raman Arora, Michael Dinitz, Teodor~V. Marinov, and Mehryar Mohri.
\newblock Policy regret in repeated games.
\newblock \emph{Advances in Neural Information Processing Systems}, 2020.

\bibitem[Camara et~al.(2020)Camara, Hartline, and Johnsen]{camara2020prior}
Modibo~K. Camara, Jason~D. Hartline, and Aleck Johnsen.
\newblock Mechanisms for a no-regret agent: Beyond the common prior.
\newblock \emph{2020 IEEE 61st Annual Symposium on Foundations of Computer
  Science (FOCS)}, 2020.

\bibitem[Conitzer et~al.(2012)Conitzer, Taylor, and Wagman]{conitzer2012}
Vincent Conitzer, Curtis Taylor, and Liad Wagman.
\newblock Hide and seek: Costly consumer privacy in a market with repeated
  purchases.
\newblock \emph{Marketing Science}, 31\penalty0 (2):\penalty0 277--292, 2012.

\bibitem[Deng et~al.(2019)Deng, Schneider, and Sivan]{deng2019strategizing}
Yuan Deng, Jon Schneider, and Balasubramanian Sivan.
\newblock Strategizing against no-regret learners.
\newblock \emph{Advances in Neural Information Processing Systems}, 32, 2019.

\bibitem[Dong et~al.(2019)Dong, Roth, and Su]{dong2019gaussian}
Jinshuo Dong, Aaron Roth, and Weijie~J Su.
\newblock Gaussian differential privacy.
\newblock \emph{arXiv preprint arXiv:1905.02383}, 2019.

\bibitem[Dwork and Roth(2014)]{dwork2014privacy}
Cynthia Dwork and Aaron Roth.
\newblock The algorithmic foundations of differential privacy.
\newblock \emph{Foundations and Trends in Theoretical Computer Science},
  9\penalty0 (3--4):\penalty0 211--407, 2014.

\bibitem[Ely and Valimaki(2003)]{ely2003reputation}
Jeffrey Ely and Juuso Valimaki.
\newblock Bad reputation.
\newblock \emph{The Quarterly Journal of Economics}, 118(3):\penalty0 785--814,
  2003.

\bibitem[Foster and Vohra(1997)]{foster1997calibrated}
Dean~P Foster and Rakesh~V Vohra.
\newblock Calibrated learning and correlated equilibrium.
\newblock \emph{Games and Economic Behavior}, 21\penalty0 (1-2):\penalty0 40,
  1997.

\bibitem[Fudenberg and Villas-Boas(2006)]{fudenberg2006behavior}
Drew Fudenberg and J~Miguel Villas-Boas.
\newblock Behavior-based price discrimination and customer recognition.
\newblock \emph{Handbook on Economics and Information Systems}, 1:\penalty0
  377--436, 2006.

\bibitem[Haghtalab et~al.(2024)Haghtalab, Podimata, and
  Yang]{haghtalab2024calibrated}
Nika Haghtalab, Chara Podimata, and Kunhe Yang.
\newblock Calibrated {S}tackelberg games: Learning optimal commitments against
  calibrated agents.
\newblock \emph{Advances in Neural Information Processing Systems}, 36, 2024.

\bibitem[Hart and Tirole(1988)]{hart1988contract}
Oliver~D Hart and Jean Tirole.
\newblock Contract renegotiation and coasian dynamics.
\newblock \emph{The Review of Economic Studies}, 55\penalty0 (4):\penalty0
  509--540, 1988.

\bibitem[Horner(2002)]{horner2002reputation}
Johannes Horner.
\newblock Reputation and competition.
\newblock \emph{American Economic Review}, 92(3):\penalty0 644--663, 2002.

\bibitem[Ichihashi(2020)]{ichihashi2020online}
Shota Ichihashi.
\newblock Online privacy and information disclosure by consumers.
\newblock \emph{American Economic Review}, 110\penalty0 (2):\penalty0 569--595,
  2020.

\bibitem[Lattimore and Szepesvari(2020)]{lattimorebandits}
Tor Lattimore and Csaba Szepesvari.
\newblock \emph{Bandit Algorithms}.
\newblock Cambridge University Press, 2020.

\bibitem[Ligett and Nissim(2023)]{ligett2023we}
Katrina Ligett and Kobbi Nissim.
\newblock We need to focus on how our data is used, not just how it is shared.
\newblock \emph{Communications of the ACM}, 66\penalty0 (9):\penalty0 32--34,
  2023.

\bibitem[Mironov(2017)]{mironov2017renyi}
Ilya Mironov.
\newblock R{\'e}nyi differential privacy.
\newblock In \emph{2017 IEEE 30th Computer Security Foundations Symposium
  (CSF)}, pages 263--275. IEEE, 2017.

\bibitem[Montes et~al.(2015)Montes, Sand-Zantman, and Valletti]{montes2015}
Rodrigo Montes, Wilfried Sand-Zantman, and Tommaso Valletti.
\newblock The value of personal information in markets with endogenous privacy.
\newblock \emph{Center for Economic and International Studies}, 13\penalty0
  (352), 2015.

\bibitem[Nissim et~al.(2017)Nissim, Bembenek, Wood, Bun, Gaboardi, Gasser,
  O'Brien, Steinke, and Vadhan]{nissim2017bridging}
Kobbi Nissim, Aaron Bembenek, Alexandra Wood, Mark Bun, Marco Gaboardi, Urs
  Gasser, David~R O'Brien, Thomas Steinke, and Salil Vadhan.
\newblock Bridging the gap between computer science and legal approaches to
  privacy.
\newblock \emph{Harvard Journal of Law \& Technology}, 31:\penalty0 687, 2017.

\bibitem[Shapiro(1983)]{shapiro1983reputation}
Carl Shapiro.
\newblock Premiums for high quality products are returns to reputation.
\newblock \emph{Quarterly Journal of Economics}, 98(4):\penalty0 659--679,
  1983.

\bibitem[Solow-Niederman(2022)]{solow2022information}
Alicia Solow-Niederman.
\newblock Information privacy and the inference economy.
\newblock \emph{Northwestern University Law Review}, 117:\penalty0 357, 2022.

\bibitem[Tang et~al.(2017)Tang, Korolova, Bai, Wang, and Wang]{tang2017privacy}
Jun Tang, Aleksandra Korolova, Xiaolong Bai, Xueqiang Wang, and Xiaofeng Wang.
\newblock Privacy loss in {A}pple's implementation of differential privacy on
  mac{OS} 10.12.
\newblock \emph{arXiv preprint arXiv:1709.02753}, 2017.

\end{thebibliography}

\newpage
\appendix

\section{Proofs from Section~\ref{sec:single-interaction}}

\subsection{Proof of Theorem \ref{thm:single stage equilibrium}} \label{sec:proof of single stage equilibrium}

\begin{proof}
Part (a) comes from the fact that $\thetalow$ buyers have no reason to pretend to have a higher valuation for the good than they actually do. 
Part (e) comes from the fact that buyers are utility maximizing. 

Part (c) comes from the following reasoning: since \signalblind sellers cannot see the buyers' signals, they must choose one price to set for all buyers. The seller wants to maximize their revenue, so they would ideally want to set the highest price that the buyer is willing to pay ($\thetahigh$ for $\thetahigh$-buyers and $\thetalow$ for $\thetalow$-buyers). 
However, the seller does not know the type of the buyer; all they know is the probability $\mu$ that the buyer is $\thetahigh$. The seller has to make a decision between charging $\thetahigh$ or $\thetalow$. 
If the seller charges $\thetalow$, both $\thetalow$ and $\thetahigh$ agents would be willing to buy, so the expected revenue is $\thetalow$.
If the seller charges the higher price $\thetahigh$, only $\thetahigh$ agents would be willing to buy, so the expected revenue is $\mu\thetahigh$, which corresponds to
\begin{align*}
    p^*_{\mathrm{\signalblind}} = \begin{cases}
        \thetalow & \text{if } \thetalow \leq \mu \thetahigh\\
        \thetahigh & \text{if } \thetalow > \mu \thetahigh.
    \end{cases}
\end{align*}

Part (c) and (d) come from the following best-response arguments.
Our goal is to show $p^*_{\mathrm{\signalaware}}$ is a best response given $q^*$ and vice versa, where
\begin{align}
    p^*_{\mathrm{\signalaware}}(s)=
    \begin{cases}
        \thetalow & \text{if } s=\slow \text{ is observed}\\
        \thetahigh & \text{if signal } s=\shigh \text{ is observed}.
    \end{cases}
\end{align}
and 
\begin{align}
    q^*=\min\bigg\{1,\frac{(1-\mu)\thetalow}{\mu\Delta\theta}\bigg\}
\end{align}

\emph{What is the \signalaware seller's best response after seeing $\shigh$?}
From part (a), we know that $\thetalow$ buyers never signal $\thetahigh$, so the seller knows that a $\shigh$ signal implies that the buyer is type $\thetahigh$ and should therefore set a price of $\thetahigh$ after seeing $\shigh$, i.e., $p^*_{\mathrm{\signalaware}}(\shigh) = \thetahigh$. 

\emph{What is the \signalaware seller's best response after seeing $\slow$?} In order for $p^*_{\mathrm{\signalaware}}$ to be a best response, it must maximize the seller's expected utility, where the expectation is over the seller's posterior belief over the buyer's type given that they have signaled $\slow$. Given probability $q^*$ that the $\thetahigh$ buyer sends signal $\slow$, the seller's posterior belief $\hat{\mu}$ that the buyer is type $\thetahigh$ is
\begin{align}\label{eq:updated mu}
    \hat{\mu}
    =
    \mathbbm{P}(\theta=\thetahigh|s=\slow)
    =
    \frac{\mathbbm{P}(s=\slow|\theta=\thetahigh)\mathbbm{P}(\theta=\thetahigh)}{\mathbbm{P}(s=\slow|\theta=\thetahigh)\mathbbm{P}(\theta=\thetahigh) + \mathbbm{P}(s=\slow|\theta=\thetalow)\mathbbm{P}(\theta=\thetalow)}
    =
    \frac{q^*\mu}{q^*\mu + 1-\mu}.
\end{align}
Let $f(p)$ denote the seller's expected utility from charging price $p$ after observing signal $\slow$, so
\begin{align}\label{eq:signalaware seller utility}
    f(p)
    =
    \begin{cases}
        p - \hat{\mu} q^* c_S  & \text{if } p < \thetalow\\
        \hat{\mu}p - \hat{\mu} q^* c_S & \text{if } p \in [\thetalow,\thetahigh].
    \end{cases}
\end{align}
In order for $p^*_{\mathrm{\signalaware}}(\slow)$ to be a best response, it must be the value that maximizes $f$: 
\begin{align}\label{eq:signalaware equilibrium price}
    p^*_{\mathrm{\signalaware}}(\slow)
    = \max_p f(p) 
    = 
    \begin{cases}
        \thetalow & \text{if } q^* \leq \min\bigg\{1,\frac{(1-\mu)\thetalow}{\mu\Delta\theta}\bigg\}\\
        \thetahigh & \text{else}.
    \end{cases} 
    = \thetalow,
\end{align}
where the last equality comes from the choice of $q^*$. This shows that $p^*_{\mathrm{\signalaware}}(\slow)=\thetalow$ is a best response for the seller.  
We now turn our attention to the $\thetahigh$-buyer. 

\emph{What is the optimal probability $q^*$ of evasion for the $\thetahigh$-buyer?} 
Let $g(q)$ denote the 
the expected utility for the $\thetahigh$ buyer when they evade with probability $q$, given that the seller is playing $p^*_{\signalblind}$ if they are \signalblind and $p^*_{\signalaware}$ if they are \signalaware, so 
\begin{align}
    g(q) 
    &= 
    \mathbbm{P}(\text{seller is \signalblind})(\text{$\thetahigh$-buyer utility if seller plays } p^*_{\mathrm{\signalblind}})\\
    &\phantom{{}=1} + 
    \mathbbm{P}(\text{seller is \signalaware})(\text{$\thetahigh$-buyer utility if seller plays } p^*_{\mathrm{\signalaware}})\\
    &=
    (1-\alpha)(\text{$\thetahigh$-buyer utility if seller plays } p^*_{\mathrm{\signalblind}}) + \alpha(\text{$\thetahigh$-buyer utility if seller plays } p^*_{\mathrm{\signalaware}})\\
    &=
    (1-\alpha)[(\Delta\theta-c_Bq)\mathbbm{1}(\thetalow \geq \mu\thetahigh)+(-c_Bq)\mathbbm{1}(\thetalow < \mu\thetahigh)]\\
    &\phantom{{}=1} +
    \alpha[(\Delta\theta-c_B)q\mathbbm{1}(q \leq \min\{1,\nicefrac{(1-\mu)\thetalow}{\mu\Delta\theta}\})+(-c_Bq)\mathbbm{1}(q>\min\{1,\nicefrac{(1-\mu)\thetalow}{\mu\Delta\theta}\})]\label{eq:buyer equilibrium utility}.
\end{align}
We analyze \eqref{eq:buyer equilibrium utility} in cases.
\begin{itemize}
    \item If $1\leq\nicefrac{(1-\mu)\thetalow}{\mu\Delta\theta}$, this implies that $\thetalow \geq \mu\thetahigh$, so \eqref{eq:buyer equilibrium utility} simplifies to
    \begin{align}
        u_B
        &=
        (1-\alpha)\Delta\theta + (\alpha\Delta\theta-c_B)q.
    \end{align}
    \item If $\nicefrac{(1-\mu)\thetalow}{\mu\Delta\theta} \leq 1$, this implies $\theta < \mu\thetahigh$, so \eqref{eq:buyer equilibrium utility} simplifies to
    \begin{align}
        u_B = 
        \begin{cases}
            (\alpha\Delta\theta-c_B)q &\text{if } q \leq \nicefrac{(1-\mu)\thetalow}{\mu\Delta\theta}\\
            -c_Bq &\text{else}.
        \end{cases}
    \end{align}
\end{itemize} 
Combining both cases, we see that the $\thetahigh$-buyer's optimal probability of evasion is
\begin{align}
    q^*=
    \begin{cases}
        \min\bigg\{1,\frac{(1-\mu)\thetalow}{\mu\Delta\theta}\bigg\} & \text{if } \alpha > \nicefrac{c_B}{\Delta\theta}\\
        0 & \text{else}.
    \end{cases}
\end{align}
\end{proof}

\subsection{Proof of Corollary \ref{cor:order of utilities}} \label{sec: proof of order of utilities}

\begin{proof}
We summarize fundamental properties of the equilibrium in Theorem \ref{thm:single stage equilibrium}, from which expressions for the buyer's and seller's expected utilities follow. At equilibrium,
\begin{enumerate}
    \item When $\thetalow \geq \mu\thetahigh$, the \signalblind seller always sets price $\thetalow$.
    \item When $\thetalow < \mu\thetahigh$, the \signalblind seller always sets price $\thetahigh$.
    \item When $\thetalow \geq \mu\thetahigh$ and $\alpha > \nicefrac{c_B}{\Delta\theta}$, the $\thetahigh$-buyer always evades (since $1 \leq \nicefrac{(1-\mu)\thetalow}{\mu\Delta\theta}$ when $\thetalow \geq \mu\thetahigh$).
    \item When $\thetalow < \mu\thetahigh$ and $\alpha > \nicefrac{c_B}{\Delta\theta}$, the $\thetahigh$-buyer evades with probability $\nicefrac{(1-\mu)\thetalow}{\mu\Delta\theta}$ (since $\nicefrac{(1-\mu)\thetalow}{\mu\Delta\theta} < 1$ when $\thetalow < \mu\thetahigh$).
    \item The $\thetahigh$-buyer always signals truthfully when $\alpha \leq \nicefrac{c_B}{\Delta\theta}$.
    \item The $\thetalow$-buyer always signals truthfully.
\end{enumerate}
\subparagraph{Buyer Utilities.}
It is straightforward to see that the $\thetalow$-buyer's expected utility is zero, so we focus on the $\thetahigh$-buyer. 
The $\thetahigh$-buyer's expected utility is
\begin{align}
    u_B
    &= 
    \mathbbm{P}(\text{seller is \signalblind})[u_B|\text{seller is \signalblind}]\\
    &\phantom{{}=1} + 
    \mathbbm{P}(\text{seller is \signalaware})[u_B|\text{seller is \signalaware}]\\
    &=
    (1-\alpha)[u_B|\text{seller is \signalblind}] + \alpha[u_B|\text{seller is \signalaware}]\label{eq:buyer utility}.
\end{align}
where, by the properties above,
\begin{itemize}
    \item If $\thetalow \geq \mu\thetahigh$, 
        \begin{align}
        u_B|\text{seller is \signalblind}
        &= 
        (\Delta\theta - c_B)\mathbbm{1}(\alpha>\nicefrac{c_B}{\Delta\theta}) + \Delta\theta\mathbbm{1}(\alpha\leq\nicefrac{c_B}{\Delta\theta}). \\
        u_B|\text{seller is \signalaware}
        &=
        (\Delta\theta-c_B)\mathbbm{1}(\alpha \leq \nicefrac{c_B}{\Delta\theta}).
        \end{align}
    \item If $\thetalow < \mu\thetahigh$,
    \begin{align}
    u_B|\text{seller is signal blind}
    &=
    -c_B(\nicefrac{(1-\mu)\thetalow}{\mu\Delta\theta})\mathbbm{1}(\alpha>\nicefrac{c_B}{\Delta\theta}) \\
    u_B|\text{seller is signal aware}
    &=
    (\Delta\theta-c_B)(\nicefrac{(1-\mu)\thetalow}{\mu\Delta\theta})\mathbbm{1}(\alpha \leq \nicefrac{c_B}{\Delta\theta})
\end{align}
\end{itemize}

\subparagraph{Seller Utility.} The seller's expected utility is
\begin{align}
    u_S
    &= 
    \mathbbm{P}(\text{seller is \signalblind})[u_S|\text{seller is \signalblind}]\\
    &\phantom{{}=1} + 
    \mathbbm{P}(\text{seller is \signalaware})[u_S|\text{seller is \signalaware}]\\
    &=
    (1-\alpha)[u_S|\text{seller is \signalblind}] + \alpha[u_S|\text{seller is \signalaware}]\label{eq:seller utility}.
\end{align}
where, by the properties above,
\begin{itemize}
    \item If $\thetalow \geq \mu\thetahigh$,
        \begin{align}
            u_S|\text{seller is \signalblind}
            &=
            \mathbbm{P}(\text{buyer is type-}\thetalow)(u_S|\text{seller is \signalblind and buyer is type-$\thetalow$})\\
            &\phantom{{}=1} +
            \mathbbm{P}(\text{buyer is type-}\thetahigh)(u_S|\text{seller is \signalblind and buyer is type-$\thetahigh$})\\
            &=
            (1-\mu)\thetalow + \mu[(\thetalow-c_S)\mathbbm{1}(\alpha > \nicefrac{c_B}{\Delta\theta}) + \thetalow\mathbbm{1}(\alpha \leq \nicefrac{c_B}{\Delta\theta})]
        \end{align}
    and
        \begin{align}
            u_S|\text{seller is \signalaware}
            &=
            \mathbbm{P}(\text{buyer is type-}\thetalow)(u_S|\text{seller is \signalaware and buyer is type-$\thetalow$})\\
            &\phantom{{}=1}+ 
            \mathbbm{P}(\text{buyer is type-}\thetahigh)(u_S|\text{seller is \signalaware and buyer is type-$\thetahigh$})\\
            &=
            (1-\mu)\thetalow + \mu[(\thetalow-c_S)\mathbbm{1}(\alpha > \nicefrac{c_B}{\Delta\theta}) + \thetahigh\mathbbm{1}(\alpha \leq \nicefrac{c_B}{\Delta\theta})].
        \end{align}
    \item If $\thetalow < \mu\thetahigh$,
        \begin{align}
            u_S|\text{seller is \signalblind}
            &=
            \mathbbm{P}(\text{buyer is type-}\thetalow)(u_S|\text{seller is \signalblind and buyer is type-$\thetalow$})\\
            &\phantom{{}=1} +
            \mathbbm{P}(\text{buyer is type-}\thetahigh)(u_S|\text{seller is \signalblind and buyer is type-$\thetahigh$})\\
            &=
            (1-\mu)0\\
            &\phantom{{}=1}+
            \mu[(\thetahigh-c_S)(\nicefrac{(1-\mu)\thetalow}{\mu\Delta\theta})\mathbbm{1}(\alpha>\nicefrac{c_B}{\Delta\theta})\\
            &\phantom{{}=1} +
            \thetahigh(1-(\nicefrac{(1-\mu)\thetalow}{\mu\Delta\theta}))\mathbbm{1}(\alpha>\nicefrac{c_B}{\Delta\theta})+\thetahigh\mathbbm{1}(\alpha\leq\nicefrac{c_B}{\Delta\theta})]
        \end{align}
        and 
        \begin{align}
            u_S|\text{seller is \signalaware}
            &=
            \mathbbm{P}(\text{buyer is type-}\thetalow)(u_S|\text{seller is \signalaware and buyer is type-$\thetalow$})\\
            &\phantom{{}=1}+ 
            \mathbbm{P}(\text{buyer is type-}\thetahigh)(u_S|\text{seller is \signalaware and buyer is type-$\thetahigh$})\\
            &=
            (1-\mu)\thetalow\\
            &\phantom{{}=1}+
            \mu[(\thetalow-c_S)(\nicefrac{(1-\mu)\thetalow}{\mu\Delta\theta})\mathbbm{1}(\alpha>\nicefrac{c_B}{\Delta\theta})\\
            &\phantom{{}=1}+
            \thetahigh(1-(\nicefrac{(1-\mu)\thetalow}{\mu\Delta\theta}))\mathbbm{1}(\alpha>\nicefrac{c_B}{\Delta\theta})+\thetahigh\mathbbm{1}(\alpha\leq\nicefrac{c_B}{\Delta\theta})].
        \end{align}
\end{itemize}

Plugging in the relevant values of $\alpha$ to the buyer's \eqref{eq:buyer utility} and seller's \eqref{eq:seller utility} utility expressions gives the stated orderings. We also see that $\alpha^*=\nicefrac{c_B}{\Delta\theta}$ maximizes the seller's utility.
\end{proof}

\section{Repeated PD Protocol} \label{sec:repeated_PD_protocol}

The following algorithm makes clear what information is available to a given player at each point in the repeated interaction setting. 

\begin{algorithm}
\caption{Repeated PD protocol}
\label{alg:repeated_pd}
\begin{algorithmic}[1] 
\STATE Parameters: $(\mu, \thetalow, \thetahigh, c_B, c_S)$
\STATE ${H}^B_1, {H}^S_1 := \varnothing$
\FOR{$t = 1$ to $T$} 
    \FOR{each buyer $i = 1$ to $N$}
        \STATE Nature draws $\theta_i^t$ with $\theta_i^t = \thetahigh$ with probability $\mu$ and $\theta_i^t = \thetalow$ otherwise
        \STATE Buyer $i$ chooses mixed strategy $\pi_S^i$ over signal $s_i^t \in \{\slow, \shigh\}$ based on $\{\theta_i^t\} \cup H^B_t$ 
        \STATE Seller chooses mixed strategy $\pi^p(\cdot \mid s=s_i^t)$ over price $p_i^t$.
        \STATE Buyer $i$ decides to buy, denoted by indicator $b_i^t$, based on $\{\theta_i^t, p_i^t\} \cup H^B_t$
    \ENDFOR
    \STATE Buyer $i$ receives utility $u_B(\theta_i^t, s_i^t, p_i^t, b_i^t)$ and seller receives utility $u_S \left ((\theta_i^t, s_i^t, p_i^t, b_i^t)_{i=1}^n \right )$, as defined in \eqref{eq:buyer_utility_function} and \eqref{eq:seller_utility_function}.
    \STATE $H^B_{t+1} = H^B_t \cup \{(\theta_i^t, s_i^t, p_i^t)_{i=1}^n \}$
    \STATE $H^S_{t+1} = H^S_t \cup \{(s_i^t, p_i^t)_{i=1}^n \}$
\ENDFOR
\end{algorithmic}
\end{algorithm}

\section{Proofs from Section~\ref{sec:repeated_interaction}}

\subsection{Proof of Proposition \ref{prop:existence_of_consistent_estimator}} \label{sec:proof of existence of consistent estimator}

\begin{proof}
    For each round $t$, let 
    \begin{align*}
        I_t = \ind{\exists i  \text{ s.t. } s_i^t = \shigh \text{ and } \exists j  \text{ s.t. } s_j^t = \slow }
    \end{align*}
    be an indicator for whether both types of signals are observed at round $t$, i.e., whether round $t$ is ``informative" about if there is price discrimination. For rounds $t$ with $I_t=1$, we additionally
    define the following random variables: 
    \begin{itemize}
        \item $\overline{P}_t = p_i^t$ for the smallest $i \in [N]$ such that $s_i^t=\shigh$,
        \item $\underline{P}_t = p_j^t$ for the smallest $j \in [N]$ such that $s_j^t=\slow$, and
        \item $X_t = \ind{\overline{P}_t \neq \underline{P}_t}$, an indicator for observed price discrimination.   
    \end{itemize}
    Note that the choice to define $\overline{P}_t$ and $\underline{P}_t$ to correspond to the \emph{smallest} index satisfying the corresponding condition is simply for concreteness; we could equivalently sample uniformly from the set of indices satisfying the condition.
    
    Recall that $H_t= ((\theta_i^{\tau}, s_i^{\tau}, p_i^{\tau})_{i=1}^n)_{\tau=1}^{t-1}$ is the history known by buyers at the beginning of round $t$.
    Consider the following estimator:
    \begin{align}\label{eq:consistent-estimator}
        \alphahat_t = \frac{1}{t}\sum_{\tau=1}^{t-1} \frac{X_{\tau} I_{\tau}}{\E[I_{\tau}|H_{\tau}]}
    \end{align}
    The expectation $\E[I_{\tau}|H_{\tau}]$ is over the randomness at round $\tau$.
    Note that $\alphahat_t$ is computable based on the history $H_t$, because $\E[I_{\tau}|H_{\tau}]$ is computable for any $\tau < t$. 
    We will now show that $\alphahat_t$ satisfies Definition \ref{def:consistentbeliefs}. We start by computing the expectation of $\alphahat_t$:
    \begin{align*}
        \E[\alphahat_t] &= \E\left[\frac{1}{t}\sum_{\tau=1}^{t-1}\frac{X_{\tau}I_{\tau}}{\E[I_{\tau}|H_{\tau}]}\right] \\
        &= \frac{1}{t} \sum_{\tau=1}^{t-1}\E\left[\frac{X_{\tau}I_{\tau}}{\E[I_{\tau}|H_{\tau}]}\right]    & \text{linearity of expectation} \\
        &= \frac{1}{t} \sum_{\tau=1}^{t-1}\E\left[\E\left[\frac{X_{\tau}I_{\tau}}{\E[I_{\tau}|H_{\tau}]} \bigg | H_{\tau}\right]\right]    & \text{tower rule}   \\
        &= \frac{1}{t} \sum_{\tau=1}^{t-1}\E\left[\frac{\E[X_{\tau}I_{\tau}|H_{\tau}]}{\E[I_{\tau}|H_{\tau}]}\right]     \\
        \intertext{Observe that $X_{\tau}$ and $I_{\tau}$ are independent given $H_{\tau}$. To see why, note that the randomness in $X_{\tau} | H_{\tau}$ comes only from the randomness in the seller's mixed strategy at round $\tau$, whereas the randomness in $I_{\tau} | H_{\tau}$ comes only from the randomness in the buyers mixed strategy at round $\tau$. The mixed strategies are fixed given $H_{\tau}$, and the additional randomness is independent. Thus,}
        &= \frac{1}{t} \sum_{\tau=1}^{t-1}\E\left[\frac{\E[X_{\tau}|H_{\tau}]\E[I_{\tau}|H_{\tau}]}{\E[I_{\tau}|H_{\tau}]}\right] \\
        &= \frac{1}{t} \sum_{\tau=1}^{t-1}\E[\E[X_{\tau}|H_{\tau}]] \\
        &= \frac{1}{t} \sum_{\tau=1}^{t-1}\E\left[\E\big[ \ind{\overline{P}_{\tau} \neq \underline{P}_{\tau}}|H_{\tau}\big]\right] & \text{by definition of $X_{\tau}$} \\
        \intertext{Since $\overline{P}_{\tau}|H_{\tau} \sim \pi^p_t(\cdot|s=\shigh)$ and $\underline{P}_{\tau}|H_{\tau} \sim \pi^p_t(\cdot|s=\slow)$ by definition of the game, we have}
        &= \frac{1}{t} \sum_{\tau=1}^{t-1} \alpha_{\tau}
    \end{align*}
    Finally, plugging in the above expression with $t=T$ into the criterion for consistency, we have
    \begin{align}
    \lim_{T \rightarrow \infty} \left \lvert \E [\alphahat_T] -  \frac 1 T \sum_{t=1}^T \alpha_t \right \rvert 
    = \lim_{T \rightarrow \infty} \left \lvert \frac{1}{T} \sum_{t=1}^{T-1} \alpha_t -  \frac{1}{T} \sum_{t=1}^T \alpha_t \right \rvert 
    = \lim_{T \rightarrow \infty} \frac{\alpha_T}{T} 
    = 0 
    \end{align}
    as desired.
    The last equality comes from the fact that $\alpha_T$ is a probability, so it is bounded between 0 and 1 for all $T$. 
\end{proof}

\subsection{Proof of Proposition~\ref{prop:regret min may not increase utility}}\label{proof:regret min may not increase utility}
\begin{proof}
First, we will show that always price-discriminating ($\pi^p_t = \pi^p_{\PD}$ for all $t \in [T]$) is no-regret against \cber-buyers. For \cber-buyers, their strategy $\pi^s_t$ at each round $t$ is either $\truthstrat$ or $\stratstrat$.
For both these buyer responses, the seller's optimal strategy is to always price discriminate as shown in the computation of the seller's equilibrium response in the proof of Theorem \ref{thm:single stage equilibrium}. 
In other words, the seller incurs zero regret in each round and thus zero average regret.

Next, we will analyze the seller's average utility. Note that when $\pi^p_t = \pdstrat$, the probability of seeing different prices for different signals is $\alpha_t=1$, so $(1/t)\sum_{\tau=1}^t \alpha_{\tau} = 1$ for all $t$. By the consistency property, $\alphahat_t$ becomes greater than $\alpha^*$ eventually (where $\alpha^*$ is as defined in Corollary \ref{cor:stackelberg}) and the buyer plays $\stratstrat$. 
In other words, eventually the seller and buyers will all be playing their equilibrium strategies for the PD-game with $\alpha=1$, so their average utilities will converge to the corresponding equilibrium utilities. 
We make this argument formal below. 

Define $\kappa < \infty$ to be the maximum utility that can be achieved by a seller in any round. The finiteness of $\kappa$ is guaranteed by definition of the seller's utility function. Define $A_T = \{\exists t > \sqrt{T} \text{ s.t. } \alphahat_t > \alpha^*\}$ and let $A_T^C=\{\alphahat_t > \alpha^* \text{ for all } t > \sqrt{T}\}$ denote the complement. Let $\gamma_T = \PP(A_T)$ and $1-\gamma_T = \PP(A_T^C)$ denote the corresponding probabilities. Then, we can decompose the expected average seller's utility as 
\begin{align} \label{eq:noregret-utility-proof}
    \E\left[\frac 1 T \sum_{t=1}^T U_S(\pi_t)\right] = \gamma_t \E\left[\frac 1 T \sum_{t=1}^T U_S(\pi_t) \Bigg| A_T \right] + (1-\gamma_t) \E\left[\frac 1 T \sum_{t=1}^T U_S(\pi_t) \Bigg| A_T^C \right].
\end{align}

The first term of \eqref{eq:noregret-utility-proof} is trivially upper bounded by $\gamma_T \kappa$.

To bound the second term of \eqref{eq:noregret-utility-proof}, first note that for any round $t$ where $\alphahat_t > \alpha^*$, the buyer's strategy will be equivalent to their equilibrium strategy with $\alpha=1$. Thus, the best utility that the seller can achieve for those rounds is $u_S(1)$. It follows that under the condition that $\alphahat_t > \alpha^*$ for every $t > \sqrt{T}$, we have 
\begin{align*}
    \frac 1 T \sum_{t=1}^T U_S(\pi_t) &= \frac 1 T \sum_{t=\sqrt{T}}^T U_S(\pi_t) + \frac 1 T \sum_{t=1}^{\sqrt{T}} U_S(\pi_t) \\
    &\leq \frac 1 T \sum_{t=\sqrt{T}}^T u_S(1) + \frac 1 T \sum_{t=1}^{\sqrt{T}} \kappa \\
    &= \frac{T -\sqrt{T}}{T} u_S(1) + \frac{\sqrt{T}\kappa}{T} \\
    &\leq u_S(1) + \frac{\kappa - u_S(1)}{\sqrt{T}}.
\end{align*}
Plugging back into \eqref{eq:noregret-utility-proof}, we get 
\begin{align} 
    \E\left[\frac 1 T \sum_{t=1}^T U_S(\pi_t)\right] \leq \gamma_t \kappa + (1-\gamma_t) \left(u_S(1) + \frac{\kappa - u_S(1)}{\sqrt{T}}\right). \\
\end{align}
By the consistency property (Lemma \ref{lemma:consistency_implication}), we know $\lim_{T\to \infty} \gamma_T = 0$, so
\begin{align} 
    \lim_{T\to \infty} \E\left[\frac 1 T \sum_{t=1}^T U_S(\pi_t)\right] \leq u_S(1). \\
\end{align}

\end{proof}

\subsection{Missing Proofs of Lemmas in Proof of Proposition~\ref{prop:regretOddsStack}}\label{proof:regretOddsStack}
\begin{proof}[Proof of Lemma~\ref{lem:lowTruthfulLowUtility}]
     Based on the utility orderings from Corollary \ref{cor:order of utilities},
    note the following ordering of seller utilities for different combinations of buyer and seller policies:
    \[U_S(\truthstrat, \pi^p_{\PD}) > U_S(\truthstrat, \pi^p_{\noPD}) > U_S(\stratstrat, \pi^p_{\PD}) > U_S(\stratstrat, \pi^p),\]
    where $\pi^p$ is any other pricing strategy besides $\pi^p_{\PD}$ and $\pi^p_{\noPD}$. We can then write
        \begin{align*}
            \frac 1 T\sum_{t=1}^T U_S(\pi_t) &\leq  \frac 1 T \sum_{t \in \mathcal{T}} U_S(\truthstrat, \nopdstrat) + \sum_{t \in [T] \backslash \mathcal{T}} U_S(\stratstrat, \pdstrat) \\
            &= \frac {|\mathcal{T}|} T U_S(\truthstrat, \nopdstrat) + \left ( 1 - \frac {|\mathcal{T}|} T \right ) U_S(\stratstrat, \pdstrat)  \\
            \lim_{T \rightarrow \infty} \frac 1 T \sum_{t=1}^T U_S(\pi_t) &\leq U_S(\truthstrat, \nopdstrat) \lim_{T \rightarrow \infty} \frac {|\mathcal{T}|} T  + U_S(\stratstrat, \pdstrat) \left ( 1 - \lim_{T \rightarrow \infty} \frac {|\mathcal{T}|} T \right ) 
            \intertext{Since $U_S(\truthstrat, \pdstrat) > U_S(\stratstrat, \pdstrat)$, the above upper bound on the limit of the average utility is increasing as $\lim_{T \rightarrow \infty}|\mathcal{T}| / T$ is increasing. When $\lim_{T \rightarrow \infty}|\mathcal{T}| / T \leq \alpha^*$, }
            &\leq \alpha^* U_S(\truthstrat, \pdstrat) + (1 - \alpha^*) U_S(\stratstrat, \pdstrat) \\
            &< \alpha^* U_S(\truthstrat, \pdstrat) + (1 - \alpha^*) U_S(\truthstrat, \pdstrat) \\
            &= \text{\stackutility}
        \end{align*}

     \end{proof}

     \begin{proof}[Proof of Lemma~\ref{lem:ImplicationNoRegret}]
    Let $R^S_T$ denote the average seller utility in the $T$ rounds. Since the seller is no regret, $\lim_{T \rightarrow \infty} R^S_T = 0$.
    
    Consider the regret due to the seller deviating to $\pdstrat$ in each round. The gain in utility in each round due to this deviation is non-negative since $\pdstrat$ is the best-response to both possible buyer strategies $\truthstrat, \stratstrat$. We can then lower bound the regret by considering regret accumulated in rounds where $\alphahat_t \leq \alpha^*$. In such rounds, all buyers are truthful, so whenever the seller does not charge a buyer the price corresponding to their signal type, they incur regret. The probability that the seller observes $\shigh$ but charges $\thetahigh$ is $\mu \pi^p_t(\thetalow|\slow)$, and this yields a loss of utility of $\Delta \theta$, because the buyer is type $\thetahigh$. Similarly, the probability that the seller observes $\slow$ but charges $\thetahigh$ is $(1-\mu)\pi^p_t(\thetahigh|\slow)$, and this yields a loss of utility of $\thetalow$, since the buyer is type $\thetalow$.
        \begin{align*}
        R^S_T &\geq \frac 1 T \sum_{t: \alphahat_t \leq \alpha^*} \mu \Delta \theta \pi^p_t(\thetalow | \shigh) + (1-\mu) \thetalow \pi^p_t(\thetahigh | \slow)
        \intertext{Let $\kappa = \min\{\mu \Delta \theta, (1-\mu) \thetalow\}$}
        &\geq \frac 1 T \sum_{t: \alphahat_t \leq \alpha^*} \kappa (1 - (\pi^p(\thetahigh | \shigh) - \pi^p(\thetahigh | \slow)))
    \end{align*}
    \begin{align*}
        \Longrightarrow \qquad \frac 1 T \sum_{t \in \mathcal{T}} \left (\pi^p_t(\thetahigh | \shigh) - \pi^p_t(\thetahigh | \slow) \right ) &\geq \frac{|\mathcal{T}|}{T} - \frac{R^S_T}{\kappa} \tag{1} \label{eq:regretImpIneq}
    \end{align*} 

        The above inequality shows that $|\mathcal{T}| / T$ is bounded above by some measure of simultaneous truthfulness and price discrimination. Each quantity $\pi^p_t(\thetahigh | \shigh) - \pi^p_t(\thetahigh | \shigh)$ is a measure of price-discrimination in each round and is related to $\alpha_t$ as described in the following lemma.

        \begin{lemma}\label{lem:relnPDMeasures}
            When seller pricing strategies are supported on $\{\thetahigh, \thetalow\}$, $\alpha_t \geq \pi^p_t(\thetahigh | \shigh) - \pi^p_t(\thetahigh | \shigh)$
        \end{lemma}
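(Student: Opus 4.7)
The plan is to unpack the definition of $\alpha_t$ as a probability over two independent draws, express it in terms of the two scalars $a := \pi^p_t(\thetahigh \mid \shigh)$ and $b := \pi^p_t(\thetahigh \mid \slow)$, and then verify the inequality by elementary algebra. The support restriction to $\{\thetalow, \thetahigh\}$ is exactly what reduces each conditional distribution to a single Bernoulli parameter, which makes everything explicit.

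First I would record that, by the definition in Section~\ref{sec:repeated_interaction}, the draws $\overline{P} \sim \pi^p_t(\cdot \mid \shigh)$ and $\underline{P} \sim \pi^p_t(\cdot \mid \slow)$ are independent (the randomness in each conditional mixed strategy is sampled separately). So
\begin{align*}
\alpha_t
&= \Pr[\overline{P} \neq \underline{P}] \\
&= \Pr[\overline{P} = \thetahigh,\, \underline{P} = \thetalow] + \Pr[\overline{P} = \thetalow,\, \underline{P} = \thetahigh] \\
&= a(1-b) + (1-a)b \\
&= a + b - 2ab.
\end{align*}

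Next I would subtract the right-hand side of the claimed inequality:
\begin{align*}
\alpha_t - \bigl(\pi^p_t(\thetahigh \mid \shigh) - \pi^p_t(\thetahigh \mid \slow)\bigr)
= (a + b - 2ab) - (a - b)
= 2b(1 - a).
\end{align*}
Since $a, b \in [0,1]$, this quantity is nonnegative, which gives exactly $\alpha_t \geq a - b$, as desired.

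There is no real obstacle; the only substantive point is that the two conditional price draws defining $\alpha_t$ are independent, so that the joint distribution factorizes and the $\Pr[\overline{P} \neq \underline{P}]$ computation reduces to the two-Bernoulli calculation above. The binary-support hypothesis is used precisely to collapse each marginal to a single parameter; without it, one would need to handle a whole distribution over prices and the argument would require a more careful coupling.
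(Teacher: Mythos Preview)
Your proof is correct and essentially identical to the paper's own argument: both compute $\alpha_t = a + b - 2ab$ for the two Bernoulli parameters and then observe that $\alpha_t - (a-b) = 2b(1-a) \geq 0$. The only cosmetic difference is that the paper writes out the full $\pi^p_t(\cdot\mid\cdot)$ notation throughout rather than abbreviating to $a$ and $b$.
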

        \begin{proof}[Proof of Lemma~\ref{lem:relnPDMeasures}]
            Since seller pricing strategies are supported on $\{\thetahigh, \thetalow\}$, $\alpha_t$ which is the probability of seeing different prices for different signals is
            \begin{align*}
            \alpha_t &= \pi^p_t(\thetahigh | \shigh) \pi^p_t(\thetalow | \slow) +  \pi^p_t(\thetalow | \shigh) \pi^p_t(\thetahigh | \slow) \\
            &= \pi^p_t(\thetahigh | \shigh) (1 - \pi^p_t(\thetahigh | \slow)) + (1  - \pi^p_t(\thetahigh | \shigh))\pi^p_t(\thetahigh | \slow) \\
            &= \pi^p_t(\thetahigh | \shigh) + \pi^p_t(\thetahigh | \slow) - 2 \pi^p_t(\thetahigh | \shigh) \pi^p_t(\thetahigh | \slow) \\
            &= \pi^p_t(\thetahigh | \shigh) - \pi^p_t(\thetahigh | \slow) + 2\pi^p_t(\thetahigh | \slow)(1 - \pi^p_t(\thetahigh | \shigh)) \\
            &\geq \pi^p_t(\thetahigh | \shigh) - \pi^p_t(\thetahigh | \slow)
        \end{align*} 
        \end{proof}
        
    By inequality~\ref{eq:regretImpIneq}, Lemma~\ref{lem:relnPDMeasures}, and since $\lim_{T \rightarrow \infty} R^S_T / \kappa = 0$,
    \begin{align*}
        \lim_{T \rightarrow \infty} \frac{|\mathcal{T}|}{T} &\leq \lim_{T \rightarrow \infty} \frac 1 T \sum_{t \in \mathcal{T}} \alpha_t.
    \end{align*}
    
    \end{proof}

    \begin{proof}[Proof of Lemma~\ref{lem:consistencySimultaneousPDTruthful}]
    Consider the last index $t^*$ in $ \mathcal{T}$. Let us consider two cases.  The first case is $\lim_{T \rightarrow \infty} t^* / T < \alpha^*$. Then,
    \begin{align*}
        \sum_{t \in \mathcal{T}} \alpha_t / T &\leq |\mathcal{T}| / T \\
        &\leq t^* / T \\
        \Longrightarrow \lim_{T \rightarrow \infty}\sum_{t \in \mathcal{T}} \alpha_t / T &\leq \alpha^*.
    \end{align*}
    
    In the second case, $\lim_{T \rightarrow \infty} t^* = \infty$. Consider $\bar{\alpha}_{t^*} = \frac 1 {t^*} \sum_{t \leq t^*} \alpha_t \geq \sum_{t \in \mathcal{T}} \alpha_t / T$. By the consistency property, $\lim_{T \rightarrow \infty} \bar{\alpha}_{t^*} = \hat{\alpha}_{t^*}$. $\hat{\alpha}_{t^*} \leq \alpha^*$ since $t^* \in \mathcal{T}$. 
    \end{proof}

\subsection{Proof of Propsition \ref{prop:policyRegretStack}} 
\label{sec:proof_of_policyRegretStack}

\begin{proof}
    Under the conditions of this proposition, the seller's utility must, by definition of policy regret, approach a utility at least as high (or better) than the utility of any strategy in $\mathbb{A}_S^{MS}$ as $T \to \infty$. Recall that \stackutility is the seller utility achieved in the PD game when $\alpha = \alpha^*$. 
    For any $\epsilon > 0$, there exists a value $\tilde{\alpha} < \alpha^*$ such that the seller's utility in the PD game with $\alpha=\tilde{\alpha}$ (denoted $u_S(\tilde{\alpha})$) is at least $\text{\stackutility}- \epsilon$. 
    Consider the seller's mixed strategy of price-discriminating with probability $\tilde{\alpha}$, so $\alpha_t = \tilde{\alpha}$ for all $t$. Similar to the proof of Proposition \ref{prop:regret min may not increase utility}, define $A_T = \{\exists t > \sqrt{T} \text{ s.t. } \alphahat_t > \alpha^*\}$ and let $A_T^C=\{\alphahat_t > \alpha^* \text{ for all } t > \sqrt{T}\}$ denote the complement. Let $\gamma_T = \PP(A_T)$ and $1-\gamma_T = \PP(A_T^C)$ denote the corresponding probabilities.
    The expected average seller's utility can be decomposed as 
    $\E\left[\frac 1 T \sum_{t=1}^T U_S(\pi_t)\right] = \gamma_t \E\left[\frac 1 T \sum_{t=1}^T U_S(\pi_t) | A_T \right] + (1-\gamma_t) \E\left[\frac 1 T \sum_{t=1}^T U_S(\pi_t) | A_T^C \right].$ Define $m$ to be the smallest utility that can be achieved by a seller in any round. Then the first term on the right side is trivially lower bounded by $\gamma_T m$. 

    To bound the second term, note that for any round $t$ where $\alphahat_t < \alpha^*$, the buyer's strategy will be equivalent to the equilibrium strategy of the PD game with $\alpha = \tilde{\alpha}$, so the seller's expected utility is $u_S(\tilde{\alpha})$. Thus, $\E\left[\frac 1 T \sum_{t=1}^T U_S(\pi_t) | A_T^C \right] = \frac 1 T \sum_{t=\sqrt{T}}^T \E[U_S(\pi_t) \mid \alpha_t < \alpha^*] +  \frac 1 T \sum_{t=1}^{\sqrt{T}} \E[U_S(\pi_t) \alpha_t < \alpha^*] \geq \frac 1 T \sum_{t=\sqrt{T}}^T u_S(\tilde{\alpha}) +  \frac 1 T \sum_{t=1}^{\sqrt{T}} m = u_S(\tilde{\alpha}) + (m - u_s(\tilde{\alpha}))/ \sqrt{T}$

    Plugging back into the expected average seller's utility yields 
    $\E\left[\frac 1 T \sum_{t=1}^T U_S(\pi_t)\right] \geq \gamma_T m + (1-\gamma_T) [u_s(\tilde{\alpha}) + (m - u_s(\tilde{\alpha}))/ \sqrt{T}]$. Since the seller is playing $\alpha_t \equiv \tilde{\alpha} < \alpha^*$, the consistency of the buyer's beliefs tells us that $\lim_{T \to \infty} \gamma_T = 0$, so $\lim_{T \to \infty} \E\left[\frac 1 T \sum_{t=1}^T U_S(\pi_t)\right] \geq u_S(\tilde{\alpha}) = \text{\stackutility} - \epsilon$.
    Taking $\epsilon$ to 0 gives the desired result. 
\end{proof}

\subsection{Proof of Proposition~\ref{prop:noBetterThanStack}}\label{proof:noBetterThanStack}
\begin{proof} 
Similar to the proof of Proposition~\ref{prop:regretOddsStack}, we will show that due to the consistency property of the beliefs $(\alphahat_t)$, we cannot simultaneously have a high degree of price-discrimination and truthful behavior from buyers. And this will imply that the seller cannot be better than \stackutility asymptotically.  \\

    We will provide the proof for the case $\thetalow \leq \mu \thetahigh$ and the proof for the other case follows similarly. 

    Let us compare the cumulative seller utilities due to a sequence of $(\pi_t)_{t=1}^T$ versus $T \cdot \text{\stackutility}$. Let us denote by ${\pi^p}^*$ the \stackstrat given in Corollary \ref{cor:stackelberg}. Then $\text{\stackutility} = U_S(\truthstrat, {\pi^p}^*)$. In the case of $\thetalow \leq \mu \thetahigh$, this is the strategy where $\pi^p(\thetalow | \slow) = 1$ and $\pi^p(\thetahigh | \shigh) = \alpha^*$. $\sum_{t=1}^T (U_S(\pi_t) - U_S(\pi^*))$ is 

    \newcommand{\size}{m}
    
    \begin{align*}
        \sum_{t \in \mathcal{T}} \left ( U_S(\truthstrat, \pi^p_t) - U_S(\truthstrat, {\pi^p}^*) \right ) + \sum_{t \not \in \mathcal{T}}\left ( U_S(\stratstrat, \pi^p_t) - U_S(\truthstrat, {\pi^p}^*) \right )
    \end{align*}

Note that for all $\pi^p$, $U_S(\stratstrat, \pi^p_t) \leq U_S(\stratstrat, \pdstrat) < U_S(\truthstrat, {\pi^p}^*)$.

    \begin{align*}
        U_S(\truthstrat, {\pi^p}^*) - U_S(\stratstrat, \pdstrat) &= \alpha^* \mu \thetahigh + \alpha^*(1-\mu) \thetalow + (1 - \alpha^*) \thetalow - \thetalow + \mu q^* c_S \\
        &\geq \alpha^* \mu \Delta \theta. \\
        \sum_{t=1}^T (U_S(\pi_t) - U_S(\pi^*)) &\leq \sum_{t \in \mathcal{T}} \left [ U_S(\truthstrat, \pi^p_t) - U_S(\truthstrat, {\pi^p}^*) \right ] - \alpha^* (T-|\mathcal{T}|) \mu \Delta \theta 
        \end{align*}

        \begin{align*}
        \intertext{Note that for any $\pi^p$, the seller's utility when buyers signal truthfully is}
        U_S(\truthstrat, \pi^p) &= \mu \thetahigh \pi^p(\thetahigh | \shigh) + \mu \thetalow \pi^p(\thetalow | \shigh) + (1 - \mu) \thetalow \pi^p(\thetalow | \slow) + 0 \cdot \pi^p(\thetahigh | \slow) \\
        \intertext{Since $\thetalow \leq \mu \thetahigh$, $\pi^{p^*}(\thetalow | \slow) = 1$ and $\pi^{p^*}(\thetahigh | \shigh) = \alpha^*$.}
        U_S(\truthstrat, \pi^p) - U_S(\truthstrat, \pi^{p^*}) &= \mu \thetahigh (\pi^p \thetahigh | \shigh) - \alpha^*) + \mu \thetalow (1 - \pi^p - (1 - \alpha^*)) + (1 - \mu) \thetalow (\pi^p(\thetalow | \slow) - 1) \\
        &= \mu \Delta \theta (\pi^p \thetahigh | \shigh) - \alpha^*) + (1 - \mu) \thetalow (\pi^p(\thetalow | \slow) - 1)
        \end{align*}

        \begin{align*}
        \sum_{t=1}^T (U_S(\pi_t) - U_S(\pi^*)) &\leq \sum_{t \in \mathcal{T}} \left [\mu \Delta \theta (\pi^p(\thetahigh | \shigh) - \alpha^*) - (1-\mu) \thetalow (1 - \pi^p(\thetalow | \slow)) \right ] - \alpha^* (T-|\mathcal{T}|) \mu \Delta \theta
        \intertext{Since $\mu \thetahigh < \thetalow$, $(1-\mu) \thetalow > \mu \Delta \theta$. So, }
        &< \sum_{t \in \mathcal{T}} \mu \Delta \theta (\pi^p_t(\thetahigh | \shigh) - \pi^p_t(\thetalow | \slow) - \alpha^*) - \alpha^* (T-|\mathcal{T}|) \mu \Delta \theta \\
        &= \sum_{t \in \mathcal{T}} \mu \Delta \theta (\pi^p_t(\thetahigh | \shigh) - \pi^p_t(\thetalow | \slow)) - \alpha^* T \mu \Delta \theta \\
        &\leq \mu \Delta \theta \sum_{t \in \mathcal{T}} \alpha_t - \alpha^* T \mu \Delta \theta \quad \text{(By Lemma~\ref{lem:relnPDMeasures})} \\
        \Longrightarrow \lim_{T \rightarrow \infty} \sum_{t=1}^T (U_S(\pi_t) - U_S(\pi^*)) &\leq \mu \Delta \theta \left (  \lim_{T \rightarrow \infty} \frac 1 T \sum_{t \in \mathcal{T}} \alpha_t - \alpha^* \right ) \\
        &\leq 0 \quad \text{(By Lemma~\ref{lem:consistencySimultaneousPDTruthful})}
    \end{align*}

\end{proof}

\end{document}